\newcommand{\W}{\ensuremath{\mathrm{W}}}
\newcommand{\NP}{{\sf NP}}
\newcommand{\ssi}{\subseteq_i}
\newcommand{\ssin}{\subsetneq_i}
\newcommand{\si}{\supseteq_i}
\newtheorem{open}{Open Problem}
\title{Acyclic, Star and Injective Colouring: A~Complexity Picture for $H$-Free Graphs\footnote{Extended abstracts of this paper are in the proceedings of ESA~2020~\cite{BJMPS20} and CSR~2021~\cite{BJMPS21}.}}
\titlerunning{Acyclic Colouring, Star Colouring and Injective Colouring for ${\mathbf H}$-Free Graphs}
\author{Jan Bok}{Computer Science Institute, %Faculty of Mathematics and Physics,
Charles University, Prague, Czech Republic}{bok@iuuk.mff.cuni.cz}{https://orcid.org/0000-0002-7973-1361}{Supported by GAUK 1580119, SVV–2020–260578, and by The European Research Council (ERC) under the European Union’s Horizon 2020 research and
innovation programme (grant agreement No 810115 - DYNASNET).}
\author{Nikola Jedli\u{c}kov\'{a}}{Department of Applied Mathematics,
%Faculty of Mathematics and Physics,
Charles University, Prague, Czech Republic}{jedlickova@kam.mff.cuni.cz}{https://orcid.org/0000-0001-9518-6386}{Supported by GAUK 1198419 and SVV--2020--260578.}
\author{Barnaby Martin}{Department of Computer Science, Durham University, Durham, United Kingdom}{barnaby.d.martin@durham.ac.uk}{}{}
\author{Pascal Ochem}{LIRMM, CNRS, Universit\'e de Montpellier, Montpellier, France}{ochem@lirmm.fr}{}{}
\author{Dani\"el Paulusma}{Department of Computer Science, Durham University, Durham United Kingdom}{daniel.paulusma@durham.ac.uk}{0000-0001-5945-9287}{Supported by the Leverhulme Trust (RPG-2016-258).}
\author{Siani Smith}{Department of Computer Science, Durham University, Durham, United Kingdom}{siani.smith@durham.ac.uk}{}{}
\authorrunning{J. Bok, N. Jedli{\u{c}}kov\'{a}, B. Martin, P. Ochem, D. Paulusma and S. Smith}
\keywords{acyclic colouring, star colouring, injective colouring, $H$-free, dichotomy}
\newcommand{\problemdef}[3]{
	\begin{center}
		\begin{boxedminipage}{.99\textwidth}
			\textsc{{#1}}\\[2pt]
			\begin{tabular}{ r p{0.8\textwidth}}
				\textit{~~~~Instance:} & {#2}\\
				\textit{Question:} & {#3}
			\end{tabular}
		\end{boxedminipage}
	\end{center}
}
\begin{document}

\maketitle

\begin{abstract}
A (proper) colouring is acyclic, star, or injective if any two colour classes induce a forest, star forest or disjoint union of vertices and edges, respectively.
Hence, every injective colouring is a star colouring and every star colouring is an acyclic colouring.
The corresponding decision problems are \textsc{Acyclic Colouring}, \textsc{Star Colouring} and \textsc{Injective Colouring} (the last problem is also known as \textsc{$L(1,1)$-Labelling}).
A classical complexity result on \textsc{Colouring} is a well-known dichotomy for $H$-free graphs (a graph is $H$-free if it does not contain $H$ as an \emph{induced} subgraph).
In contrast, there is no systematic study into the computational complexity of \textsc{Acyclic Colouring}, \textsc{Star Colouring} and \textsc{Injective Colouring} despite numerous algorithmic and structural results that have appeared over the years.
We perform such a study and give almost complete complexity classifications for
\textsc{Acyclic Colouring}, \textsc{Star Colouring} and \textsc{Injective Colouring} on $H$-free graphs
(for each of the problems, we have one open case).
Moreover, we give full complexity classifications if the number of colours $k$ is fixed, that is, not part of the input.
From our study it follows that for fixed~$k$ the three problems behave in the same way, but this is no longer true if $k$ is part of the input.
To obtain several of our results we prove stronger complexity results that in particular involve the girth of a graph and the class of line graphs of multigraphs.
\end{abstract}

\section{Introduction}\label{s-intro}

We study the complexity of three classical colouring problems. We do this by focusing on \emph{hereditary} graph classes, i.e., classes closed under vertex deletion, or equivalently, classes characterized by a (possibly infinite) set ${\cal F}$ of forbidden induced subgraphs. As evidenced by numerous complexity studies in the literature, even the case where $|{\cal F}|=1$ captures a rich family of graph classes suitably interesting to develop general methodology. Hence, we usually first set ${\cal F}=\{H\}$ and consider the class of \emph{$H$-free} graphs, i.e., graphs that do not contain $H$ as an induced subgraph. We then investigate how the complexity of a problem restricted to $H$-free graphs depends on the choice of $H$ and try to obtain a \emph{complexity dichotomy}.

To give a well-known and relevant example, the \textsc{Colouring} problem is to decide, given a graph $G$ and integer $k\geq 1$, if $G$
has a \emph{$k$-colouring}, i.e., a mapping $c:V(G)\to \{1,\ldots,k\}$ such that $c(u)\neq c(v)$ for every two adjacent vertices $u$ and $v$.
Kr\'al' et al.~\cite{KKTW01} proved that \textsc{Colouring} on $H$-free graphs is polynomial-time solvable if $H$ is an induced subgraph of $P_4$ or $P_1+P_3$ and \NP-complete otherwise. Here, $P_n$ denotes the $n$-vertex path and $G_1+G_2=(V(G_1)\cup V(G_2),E(G_1)\cup E(G_2))$ the disjoint union of two vertex-disjoint graphs $G_1$ and $G_2$.
If $k$ is fixed (not part of the input), then we obtain the \textsc{$k$-Colouring} problem. No complexity dichotomy is known for $k$-\textsc{Colouring} if $k\geq 3$.
In particular, the complexities of $3$-\textsc{Colouring} for $P_t$-free graphs for $t\geq 8$ and $k$-\textsc{Colouring} for $sP_3$-free graphs for $s\geq 2$ and $k\geq 4$ are still open. Here, we write $sG$ for the disjoint union of $s$ copies of $G$. We refer to the survey of Golovach et al.~\cite{GJPS17} for further details and to~\cite{CHSZ18,KMMNPS18} for updated summaries.

For a colouring $c$ of a graph $G$, a \emph{colour class} consists of all vertices of $G$ that are mapped by $c$ to a specific colour~$i$.
We consider the following special graph colourings. A colouring of a graph $G$ is \emph{acyclic} if the union of any two colour classes induces a forest.
The $(r+1)$-vertex \emph{star} $K_{1,r}$ is the graph with vertices $u,v_1,\ldots, v_r$ and edges $uv_i$ for every $i\in \{1,\ldots,r\}$.
An acyclic colouring is a \emph{star colouring} if the union of any two colour classes induces a \emph{star forest}, that is, a forest in which each connected component is a star.
A star colouring is \emph{injective} (or an \emph{$L(1,1)$-labelling} or a \emph{distance-$2$ colouring}) if the union of any two colour classes induces
an $sP_1+tP_2$ for some integers $s\geq 0$ and $t\geq 0$. An alternative definition is to say that all the neighbours of every vertex of $G$ are uniquely coloured.
These definitions lead to the following three decision problems:

\problemdef{Acyclic Colouring}{A graph $G$ and an integer $k\geq 1$}{Does $G$ have an acyclic $k$-colouring?}

\problemdef{Star Colouring}{A graph $G$ and an integer $k\geq 1$}{Does $G$ have a star $k$-colouring?}

\problemdef{Injective Colouring}{A graph $G$ and an integer $k\geq 1$}{Does $G$ have an injective $k$-colouring?}

\noindent
If $k$ is fixed, we write \textsc{Acyclic $k$-Colouring}, \textsc{Star $k$-Colouring} and \textsc{Injective $k$-Colouring}, respectively.

All three problems have been extensively studied. We note that in the literature on the \textsc{Injective Colouring} problem it is often assumed that two adjacent vertices may be coloured alike by an injective colouring (see, for example,~\cite{HKSS02,HRS08,JXZ13}).
However, in our paper, we do {\bf not} allow this; as reflected in their definitions we only consider colourings that are proper. This enables us to compare the results for the three different kinds of colourings with each other.

So far, systematic studies mainly focused on structural characterizations, exact values, lower and upper bounds on the minimum number of colours in an acyclic colouring or star colouring
(i.e., the \emph{acyclic} and \emph{star chromatic number}); 
see, e.g.,~\cite{AMR91,Bo79,FGR02,FR08,FRR04,KM11,Ka18,Wo05,ZB04,ZLSX16}, to name just a few papers,
whereas injective colourings (and the \emph{injective chromatic number}) were mainly considered in the context of the distance constrained labelling framework (see the survey~\cite{Ca11} and Section~\ref{s-con} therein).
The problems have also been studied from a complexity perspective, but apart from a study on \textsc{Acyclic Colouring} for graphs of bounded maximum degree~\cite{MNRW13},
known results are scattered and relatively sparse. We perform a \emph{systematic} and \emph{comparative} complexity study by focusing on the following research question both for $k$ part of the input and for fixed $k$:

\medskip
\noindent
\emph{What are the computational complexities of {\sc Acyclic Colouring}, {\sc Star Colouring} and {\sc Injective Colouring} for $H$-free graphs?}

\paragraph*{Known Results}

Before discussing our new results and techniques, we first briefly discuss some known results.

Coleman and Cai~\cite{CC86} proved that for every $k\geq 3$, \textsc{Acyclic $k$-Colouring} is \NP-complete for bipartite graphs.
Afterwards, a number of hardness results appeared for other hereditary graph classes.
Alon and Zaks~\cite{AZ02} showed that \textsc{Acyclic $3$-Colouring} is \NP-complete for line graphs of maximum degree~$4$.
Kostochka~\cite{Ko78} proved that \textsc{Acyclic $3$-Colouring} is \NP-complete for planar graphs.
This result was improved to planar bipartite graphs of maximum degree~$4$ by Ochem~\cite{Oc05}.
Mondal et al.~\cite{MNRW13} proved that \textsc{Acyclic $4$-Colouring} is \NP-complete for graphs of maximum degree~$5$ and for planar graphs of maximum degree~$7$.
Ochem~\cite{Oc05} showed that \textsc{Acyclic $4$-Colouring} is \NP-complete for planar bipartite graphs of maximum degree~$8$.
We refer to the paper of Angelini and Frati~\cite{AF12} for a further discussion on acyclic colourable planar graphs.

Albertson et al.~\cite{ACKKR04} and recently, Lei et al.~\cite{LSS18} proved that \textsc{Star $3$-Colouring} is \NP-complete for planar bipartite graphs and line graphs, respectively.
Shalu and Antony~\cite{SA20} showed that \textsc{Star Colouring} is \NP-complete for co-bipartite graphs.
Bodlaender et al.~\cite{BKTL04}, Sen and Huson~\cite{SH97} and Lloyd and Ramanathan~\cite{LR92}
proved that \textsc{Injective Colouring} is \NP-complete for split graphs, unit disk graphs and planar graphs, respectively.
Mahdian~\cite{Ma02} proved that for every $k\geq 4$, \textsc{Injective $k$-Colouring} is \NP-complete for line graphs,
whereas \textsc{Injective $4$-Colouring} is also known to be \NP-complete for cubic graphs (see~\cite{Ca11}). Observe that \textsc{Injective $3$-Colouring} is trivial for general graphs.

On the positive side, Lyons~\cite{Ly11} proved that \textsc{Acyclic Colouring} and \textsc{Star Colouring} are polynomial-time solvable for $P_4$-free graphs; in particular, he showed that every acyclic colouring of a $P_4$-free graph is, in fact, a star colouring.
We note that \textsc{Injective Colouring} is trivial for $P_4$-free graphs, as every injective colouring must assign each vertex of a connected $P_4$-free graph a unique colour.
Afterwards, the results of Lyons have been extended to $P_4$-tidy graphs and ($q,q-4)$-graphs by Linhares-Sales et al.~\cite{SMMS14}.

Cheng et al.~\cite{CMS11} complemented the aforementioned result of Alon and Zaks~\cite{AZ02} by proving that \textsc{Acyclic Colouring} is polynomial-time solvable for claw-free graphs of maximum degree at most~$3$.
Calamoneri~\cite{Ca11} observed that \textsc{Injective Colouring} is polynomial-time solvable for comparability and co-comparability graphs.
Zhou et al.~\cite{ZKN00} proved that \textsc{Injective Colouring} is polynomial-time solvable for graphs of bounded treewidth (which is best possible due to the \W[1]-hardness result of Fiala et al.~\cite{FGK11}).

Finally, we refer to~\cite{BGMPS} for a complexity study of \textsc{Acyclic Colouring}, \textsc{Star Colouring} and \textsc{Injective Colouring} for graphs of bounded diameter.

\paragraph*{Our Complexity Results and Methodology}
The \emph{girth} of a graph~$G$ is the length of a shortest cycle of $G$ (if $G$ is a forest, then its girth is $\infty$).
To answer our research question we focus on two important graph classes, namely the classes of graphs of high girth and line graphs of multigraphs, which are interesting classes on their own.
If a problem is \NP-complete for both classes, then it is \NP-complete for $H$-free graphs whenever $H$ has a cycle or a claw. It then remains to analyze the case when $H$ is a \emph{linear forest}, i.e., a disjoint union of paths; see~\cite{BDFJP19,BGPS12,GLPR19,KKTW01} for examples of this approach, which we discuss in detail below.

The construction of graph families of high girth and large chromatic number is well studied in graph theory (see, e.g.~\cite{Er59}).
To prove their complexity dichotomy for \textsc{Colouring} on $H$-free graphs, Kr\'al' et al.~\cite{KKTW01} first showed that for every integer $g\geq 3$, \textsc{$3$-Colouring} is \NP-complete for the class of graphs of girth at least~$g$. This approach can be readily extended to any integer $k\geq 3$~\cite{EHK98,LK07}. The basic idea is to replace edges in a graph by graphs of high girth and large chromatic number,
such that the resulting graph has sufficiently high girth and is $k$-colourable if and only if the original graph is so (see also~\cite{GPS14,HJP15}).
 
By a more intricate use of the above technique we are able to prove that for every $g\geq 3$ and every $k\geq 3$, \textsc{Acyclic $k$-Colouring} is \NP-complete for the class of $2$-degenerate bipartite graphs of girth at least~$g$.
This implies that \textsc{Acyclic $k$-Colouring} is \NP-complete for $H$-free graphs whenever $H$ has a cycle.
We are also able to prove that for every $g\geq 3$, \textsc{Star $3$-Colouring} remains \NP-complete even for planar graphs of girth at least~$g$ and maximum degree~$3$.
This implies that \textsc{Star $3$-Colouring} is \NP-complete for $H$-free graphs whenever $H$ has a cycle.
We prove the latter result for every $k\geq 4$ by combining known results, just as we use known results to prove that \textsc{Injective $k$-Colouring} $(k\geq 4$) is \NP-complete for $H$-free graphs if $H$ has a cycle.

A classical result of Holyer~\cite{Ho81} is that \textsc{$3$-Colouring} is \NP-complete for line graphs (and Leven and Galil~\cite{LG83} proved the same for $k\geq 4$).
As line graphs are claw-free, Kr\'al' et al.~\cite{KKTW01} used Holyer's result to show that \textsc{$3$-Colouring} is \NP-complete for $H$-free graphs whenever $H$ has an induced claw.
For \textsc{Acyclic $k$-Colouring}, we can use Alon and Zaks' result~\cite{AZ02} for $k=3$, which we extend to work for $k\geq 4$.
For \textsc{Star $k$-Colouring} we extend the recent result of Lei et al.~\cite{LSS18} from $k=3$ to $k\geq 3$ (in both our results we consider line graphs of multigraphs; these graphs are claw-free and hence suffice for our study on $H$-free graphs).
For \textsc{Injective $k$-Colouring} $(k\geq 4$) we can use the aforementioned result on line graphs of Mahdian~\cite{Ma02}.

The above hardness results leave us to consider the case where $H$ is a linear forest. In Section~\ref{s-gen} we will use a result of Atminas et al.~\cite{ALR12} to prove a general result from which it follows that for fixed $k$, all three problems are polynomial-time solvable for $H$-free graphs if $H$ is a linear forest. Hence, we have full complexity dichotomies for the three problems when $k$ is fixed.
However, these positive results do not extend to the case where $k$ is part of the input.
That is, for each of the three problems, we prove \NP-completeness for graphs that are $P_r$-free for some small value of $r$ or have a small independence number, i.e., that are $sP_1$-free for some small integer $s$.

Our complexity results for $H$-free graphs are summarized in the following three theorems, proven in Sections~\ref{s-acyclic}--\ref{s-injective}, respectively; see Table~\ref{t-thetable} for a comparison. For two graphs $F$ and $G$, we write $F\ssi G$ or $G\si F$ to denote that $F$ is an \emph{induced} subgraph of $G$.

\begin{theorem}\label{t-acyclic}
Let $H$ be a graph. For the class of $H$-free graphs it holds that:\\[-10pt]
\begin{enumerate}
\item [(i)] {\sc Acyclic Colouring} is polynomial-time solvable if $H\ssi P_4$ and \NP-complete if $H\not\ssi P_4$ and $H\neq 2P_2$;\\[-10pt]
\item [(ii)] for every $k\geq 3$, {\sc Acyclic $k$-Colouring} is polynomial-time solvable if $H$ is a linear forest and \NP-complete otherwise.
\end{enumerate}
\end{theorem}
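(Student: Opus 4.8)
The plan is to reduce the whole classification to the single structural question of whether $H$ is a linear forest, using the elementary observation that a graph fails to be a linear forest precisely when it has a cycle or a vertex of degree at least~$3$; in the latter case, if $H$ is moreover a forest, that vertex together with three of its neighbours induces a claw $K_{1,3}$, since in a forest those neighbours are pairwise non-adjacent. Thus a non-linear-forest $H$ always contains an induced $C_g$ (with $g$ its girth) or an induced $K_{1,3}$. I will use two containment principles throughout: graphs of girth larger than $|V(H)|$ are $H$-free whenever $H$ has a cycle, and claw-free graphs are $H$-free whenever $K_{1,3}\ssi H$; both hold because an induced subgraph of an induced subgraph is induced.

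For part~(ii) the polynomial cases are immediate from the general result of Section~\ref{s-gen} (built on Atminas et al.~\cite{ALR12}), which solves {\sc Acyclic $k$-Colouring} in polynomial time on $H$-free graphs whenever $H$ is a linear forest. For the hardness, suppose $H$ is not a linear forest. If $H$ has a cycle, fix $g>|V(H)|$; the $2$-degenerate bipartite graphs of girth at least~$g$ are then $H$-free, and {\sc Acyclic $k$-Colouring} is \NP-complete on this class for every $k\geq 3$ by the high-girth result stated in the methodology. If instead $H$ is a forest with a degree-$3$ vertex, then $K_{1,3}\ssi H$, so the class of line graphs of multigraphs is $H$-free, and {\sc Acyclic $k$-Colouring} is \NP-complete there for every $k\geq 3$ by the extension of Alon and Zaks~\cite{AZ02}. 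Either way the problem is \NP-complete on a subclass of the $H$-free graphs, completing part~(ii).

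Part~(i) reuses these ingredients and adds one new reduction. For the polynomial side, if $H\ssi P_4$ then every $H$-free graph is $P_4$-free (an induced $P_4$ would force an induced $H$), so the algorithm of Lyons~\cite{Ly11} applies. For the hardness side, any lower bound for fixed $k$ transfers to the input version, so whenever $H$ is not a linear forest the cycle/claw argument of part~(ii) already gives \NP-completeness. It remains to treat a linear forest $H$ with $H\not\ssi P_4$ and $H\neq 2P_2$. The decisive finite case analysis is to list the linear forests with $\alpha(H)\leq 2$: using $\alpha\bigl(\sum_i P_{a_i}\bigr)=\sum_i\lceil a_i/2\rceil$, these are exactly $P_1,P_2,P_3,P_4,2P_1,P_1+P_2$ and $2P_2$, of which only $2P_2$ is not an induced subgraph of $P_4$. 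Hence every remaining $H$ satisfies $\alpha(H)\geq 3$, i.e.\ $3P_1\ssi H$, so every $3P_1$-free graph is $H$-free, and it suffices to show that {\sc Acyclic Colouring} (with $k$ in the input) is \NP-complete on $3P_1$-free graphs, the $s=3$ instance of the small-independence-number hardness.

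This last reduction is the main obstacle and the only genuinely new work. The useful simplification is that in a graph of independence number at most~$2$ every colour class of a proper colouring has size at most~$2$, so the union of two classes induces a bipartite graph on at most $2+2$ vertices and the acyclicity condition collapses to forbidding an induced $4$-cycle across any two ``paired'' non-edges; one then designs gadgets encoding a suitable \NP-complete source problem through exactly such pairings, in the spirit of the co-bipartite construction of Shalu and Antony~\cite{SA20} for {\sc Star Colouring}. The reason $2P_2$ must be left open becomes transparent here: it is the unique linear forest with $\alpha=2$ that is not an induced subgraph of $P_4$, so it is caught neither by Lyons' polynomial algorithm nor by the $3P_1$-free reduction, which requires independence number at least~$3$.
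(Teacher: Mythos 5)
Your overall architecture matches the paper's proof exactly: part~(ii) splits on whether $H$ has a cycle (high-girth hardness, Lemma~\ref{l-girth}), is a forest with a vertex of degree at least~$3$ (hence contains an induced claw, so the line-graph hardness of Lemma~\ref{l-az} applies), or is a linear forest (Corollary~\ref{c-linearforest}); and part~(i) reduces, via~(ii) and Lyons~\cite{Ly11}, to proving hardness for linear forests $H$ with $3P_1\ssi H$, which is exactly how the paper isolates $H=2P_2$ as the single open case. Your independence-number computation identifying $2P_2$ as the unique linear forest with $\alpha(H)\leq 2$ that is not an induced subgraph of $P_4$ is correct and is the same observation the paper relies on.

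The gap is the $3P_1$-free hardness itself. You correctly observe that in a $3P_1$-free graph every colour class has size at most two, so that acyclicity collapses to forbidding bichromatic $4$-cycles between two size-two classes; but ``one then designs gadgets encoding a suitable \NP-complete source problem through exactly such pairings'' is a plan, not a proof --- you name neither the source problem nor the reduction, and you yourself flag this step as the only genuinely new work. The paper closes exactly this gap in Lemma~\ref{l-3p1acyclic} with a short, gadget-free equivalence: by a result of Alon et al.~\cite{ACGMRSWY20}, deciding whether a balanced bipartite graph $G$ on $2n$ vertices has a \emph{connected} matching of size $n$ (no two matching edges inducing $2K_2$) is \NP-complete, and one checks directly that such a matching exists if and only if $\overline{G}$ (which is co-bipartite, hence $3P_1$-free) admits an acyclic colouring with $n$ colours --- a size-$n$ connected matching is precisely a partition into $n$ colour classes of size two in which no two classes span a bichromatic $C_4$. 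Until you supply a concrete source problem and an equivalence of this kind, part~(i) is not established for any $H$ with $3P_1\ssi H$, which is all of the remaining hardness cases.
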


\begin{theorem}\label{t-star}
Let $H$ be a graph. For the class of $H$-free graphs it holds that:\\[-10pt]
\begin{enumerate}
\item [(i)] {\sc Star Colouring} is polynomial-time solvable if $H\ssi P_4$ and \NP-complete if $H\not\ssi P_4$ and $H\neq 2P_2$;\\[-10pt]
\item [(ii)] for every $k\geq 3$, {\sc Star $k$-Colouring} is polynomial-time solvable if $H$ is a linear forest and \NP-complete otherwise.
\end{enumerate}
\end{theorem}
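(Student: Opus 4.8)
The plan is to prove both parts by assembling the positive results for $P_4$-free graphs and for linear forests with the three hardness results at our disposal: hardness on graphs of arbitrarily high girth, on line graphs of multigraphs, and on co-bipartite graphs. Membership in \NP{} is clear throughout, since one can guess a colouring and check in polynomial time that every pair of colour classes induces a star forest, so I would argue only the two boundaries in each part. Two transfer principles will be used repeatedly: if $F\ssi H$ then every $F$-free graph is $H$-free, so hardness on $F$-free graphs yields hardness on $H$-free graphs; and if $H$ contains a cycle then, writing $g_0$ for the girth of $H$ and using that a shortest cycle is chordless, every graph of girth exceeding $g_0$ is $H$-free.

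For part~(ii), the polynomial side is immediate: when $H$ is a linear forest, the general result of Section~\ref{s-gen} obtained via Atminas et al.~\cite{ALR12} solves \textsc{Star $k$-Colouring} in polynomial time for fixed~$k$. For the hardness, a graph that is not a linear forest must contain a cycle or, being a forest otherwise, must have a vertex of degree at least~$3$ and hence an induced claw~$K_{1,3}$. If $H$ contains a cycle, the second transfer principle lets the \NP-completeness of \textsc{Star $k$-Colouring} on graphs of arbitrarily high girth—proved here for $k=3$ and, by combining known results, for $k\geq 4$—carry over. If instead $K_{1,3}\ssi H$, then since line graphs of multigraphs are $K_{1,3}$-free, the first transfer principle carries over the \NP-completeness of \textsc{Star $k$-Colouring} on such graphs, which is our extension of Lei et al.~\cite{LSS18} to all $k\geq 3$. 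This settles part~(ii).

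For part~(i), if $H\ssi P_4$ then every $H$-free graph is $P_4$-free and Lyons' polynomial-time algorithm~\cite{Ly11} applies. For the hardness I would assume $H\not\ssi P_4$ and $H\neq 2P_2$ and reuse the two transfers above—now only the $k=3$ hardness is needed, since hardness of \textsc{Star $3$-Colouring} on a class yields hardness of \textsc{Star Colouring} on it—to dispose of the subcases in which $H$ has a cycle or an induced claw. This leaves the subcase that $H$ is a linear forest with $H\not\ssi P_4$ and $H\neq 2P_2$.

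The step I expect to be the main obstacle is showing that every such linear forest contains $3P_1$ as an induced subgraph, equivalently has independence number at least~$3$. The plan is to enumerate the linear forests of independence number at most~$2$: writing $H$ as a disjoint union of paths and using that $P_n$ has independence number $\lceil n/2\rceil$, these are precisely $P_1,P_2,P_3,P_4,2P_1,P_1+P_2$ and $2P_2$, and all of them except $2P_2$ are induced subgraphs of~$P_4$. Hence a linear forest with $H\not\ssi P_4$ and $H\neq 2P_2$ has independence number at least~$3$, so $3P_1\ssi H$. As co-bipartite graphs have independence number at most~$2$ and so are $3P_1$-free, the first transfer principle carries the \NP-completeness of \textsc{Star Colouring} on co-bipartite graphs of Shalu and Antony~\cite{SA20} over to $H$-free graphs, finishing part~(i).
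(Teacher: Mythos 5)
Your part~(i) and the polynomial and claw subcases of part~(ii) follow the paper's proof essentially verbatim (including the enumeration showing that a linear forest other than $2P_2$ not contained in $P_4$ has independence number at least~$3$, which is exactly what the paper uses implicitly before invoking Shalu and Antony on co-bipartite graphs). However, part~(ii) has a genuine gap in the subcase where $H$ contains a cycle and $k\geq 4$: you invoke ``the \NP-completeness of \textsc{Star $k$-Colouring} on graphs of arbitrarily high girth \dots by combining known results, for $k\geq 4$,'' but no such result is available. The paper proves the high-girth hardness only for $k=3$ (Lemma~\ref{lem:star-col-high-girth}) and explicitly records the $k\geq 4$ version as unresolved (Open Problem~\ref{o-star}); the only known girth bound for $k\geq 4$ is girth~$4$, via bipartite graphs. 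So your second transfer principle has nothing to be applied to when the shortest cycle of $H$ is long and $k\geq 4$.

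The paper closes this case differently, by splitting on the parity of an induced cycle of $H$. If $H$ contains an induced odd cycle, then bipartite graphs are $H$-free, and \textsc{Star $k$-Colouring} is \NP-complete for bipartite graphs by Lemma~7.1 of Albertson et al.~\cite{ACKKR04} together with the \NP-completeness of \textsc{$k$-Colouring}. If $H$ contains an induced even cycle $C_p$ ($p\geq 4$), it uses Lemma~\ref{l-evencycle}: starting from the $k=3$ high-girth instances, a dominating clique of size $k-3$ is added. This construction destroys high girth (it creates triangles), but every induced cycle through a dominating vertex has length~$3$, so induced-$C_p$-freeness for $p\geq 4$ is preserved, which is all that is needed since $C_p$-free graphs are $H$-free. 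You should replace your high-girth appeal for $k\geq 4$ with this odd/even case split (or some equivalent argument); as written, that subcase of your proof does not go through.
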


\begin{theorem}\label{t-injective}
Let $H$ be a graph. For the class of $H$-free graphs it holds that:\\[-10pt]
\begin{enumerate}
\item [(i)] {\sc Injective Colouring} is polynomial-time solvable if $H\ssin 2P_1+P_4$ and \NP-complete if $H\not\ssi 2P_1+P_4$;\\[-10pt]
\item [(ii)] for every $k\geq 4$, {\sc Injective $k$-Colouring} is polynomial-time solvable if $H$ is a linear forest and \NP-complete otherwise.
\end{enumerate}
\end{theorem}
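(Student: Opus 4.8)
The plan is to treat the two parts separately, each time splitting into a polynomial-time side and an \NP-completeness side, and to push all hardness for $H$-free graphs down to a handful of concrete subclasses. Throughout I use that an injective $k$-colouring of $G$ is exactly a proper $k$-colouring of the graph on $V(G)$ in which two vertices are joined whenever they are at distance $1$ or $2$ in $G$; so \textsc{Injective Colouring} is \textsc{Colouring} of this ``square''.

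For part~(ii), the positive side is immediate: if $H$ is a linear forest then, for every fixed $k$, the general result announced in Section~\ref{s-gen} (built on Atminas et al.~\cite{ALR12}) already gives a polynomial-time algorithm. For the negative side I would use that a graph which is not a linear forest contains either a cycle or an induced claw $K_{1,3}$: if some vertex has three pairwise non-adjacent neighbours we obtain an induced claw, and otherwise such a vertex lies in a triangle. The claw case is handled by Mahdian's result~\cite{Ma02} that \textsc{Injective $k$-Colouring} is \NP-complete for line graphs for every $k\ge 4$; since line graphs are claw-free they are $H$-free whenever $K_{1,3}\ssi H$. The cycle case is handled by establishing \NP-completeness of \textsc{Injective $k$-Colouring} on graphs of girth at least $g$, for every $g$ and every $k \ge 4$ (by invoking, as indicated in the introduction, a known result of this high-girth type): a graph of girth exceeding $|V(H)|$ cannot contain $H$ as an induced subgraph once $H$ has a cycle, so it is $H$-free.

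For part~(i) the hardness side decomposes cleanly. If $H$ is not a linear forest, fix $k=4$ and quote part~(ii). If $H$ is a linear forest with $H \not\ssi 2P_1+P_4$, the key combinatorial observation is that $2P_2 \ssi H$ or $5P_1 \ssi H$: the absence of an induced $2P_2$ forces at most one non-trivial component, of at most four vertices (a path on five vertices already contains an induced $2P_2$), and then the absence of $5P_1$, i.e.\ independence number at most $4$, forces $H \ssi 2P_1+P_4$. Since $F\ssi H$ implies that the class of $F$-free graphs is contained in the class of $H$-free graphs, it suffices to prove \textsc{Injective Colouring} \NP-complete on $2P_2$-free graphs and on $5P_1$-free graphs. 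The first follows for free from the result of Bodlaender et al.~\cite{BKTL04} for split graphs, which are $2P_2$-free. The second, \NP-completeness for graphs of bounded independence number, is where I would design a new reduction producing graphs of independence number at most $4$; its sharpness matches the tractable side, since $4P_1$-free graphs fall under $H\ssin 2P_1+P_4$.

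The tractable side of part~(i) is the main obstacle. Here $H \ssin 2P_1+P_4$ means $H\ssi P_1+P_4$ or $H\ssi 2P_1+P_3$, so I must give polynomial-time algorithms for \textsc{Injective Colouring} on $(P_1+P_4)$-free and on $(2P_1+P_3)$-free graphs. The plan is to prove structural decompositions for these two classes (for instance, a connected $(P_1+P_4)$-free graph is either a cograph or contains a dominating induced $P_4$, since any vertex missed by an induced $P_4$ would create an induced $P_1+P_4$) and to use them to compute the chromatic number of the square in polynomial time, by reducing to a tractable colouring instance. I expect this algorithmic analysis, together with the bounded-independence-number reduction above, to be the hardest part; the single remaining case $H=2P_1+P_4$ itself is left open.
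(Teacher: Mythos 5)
Your overall architecture (reduce hardness to concrete subclasses, use the Atminas--Lozin--Razgon machinery for linear forests) matches the paper, but there are two concrete gaps. First, in part~(ii) your cycle case rests on ``a known result'' that \textsc{Injective $k$-Colouring} is \NP-complete for graphs of girth at least $g$ for every $g$. No such result is known: the complexity of \textsc{Injective $k$-Colouring} ($k\geq 4$) on graphs of girth at least $g\geq 5$ is explicitly posed as an open problem in this paper. The point you are missing is that one does not need a single hard class avoiding \emph{all} short cycles; for each fixed $p$ it suffices to have a hard class of $C_p$-free graphs. The paper splits the cycle case accordingly: for $C_3\ssi H$ it proves (this is a new reduction, not a quoted result) that \textsc{Injective $k$-Colouring} is \NP-complete for \emph{bipartite} graphs, which are $C_3$-free; for $C_p\ssi H$ with $p\geq 4$ it uses Mahdian's result that the problem is \NP-complete for line graphs of bipartite graphs of girth at least $g$, which are $C_p$-free once $g\geq p+1$ (even though they contain triangles). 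Without replacing your high-girth step by something of this kind, part~(ii) does not go through when $H$ contains a cycle.

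Second, in the tractable side of part~(i) your case analysis is incomplete: the maximal proper induced subgraphs of $2P_1+P_4$ are $P_1+P_4$, $2P_1+P_3$ \emph{and} $3P_1+P_2$ (delete an internal vertex of the $P_4$), and $3P_1+P_2$ is contained in neither of the other two (in both $P_1+P_4$ and $2P_1+P_3$, every edge has at most two common non-neighbours). So you also need a polynomial-time algorithm for $(3P_1+P_2)$-free graphs, which in the paper is a separate and non-trivial lemma that itself relies on an algorithm for $4P_1$-free graphs. Relatedly, your algorithmic plan for $(P_1+P_4)$-free and $(2P_1+P_3)$-free graphs (structural decomposition, then colour the square) is only a placeholder; the paper's actual mechanism is to show that in these classes every colour class of an injective colouring has size at most $2$ (respectively at most $3$), and that size-$2$ classes correspond to dominating edges of the complement, reducing the problem to maximum matching. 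The hardness half of part~(i) is fine in outline: your observation that a linear forest $H$ with $H\not\ssi 2P_1+P_4$ satisfies $2P_2\ssi H$ or $5P_1\ssi H$ is correct and is exactly the paper's split (split graphs for $2P_2$, a new reduction for $5P_1$), though the $5P_1$ reduction is of course still to be supplied.
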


\noindent
In Section~\ref{s-con} we give a number of open problems that resulted from our systematic study; in particular we will discuss the distance constrained labelling framework in more detail.

\begin{center}
\begin{table}
\begin{tabular}{|l|l|l|c|}
\hline
& polynomial time & NP-complete\\
\hline
\textsc{Colouring}~\cite{KKTW01} & $H\ssi P_4$ or $P_1+P_3$ & else\\
\hline
\textsc{Acyclic Colouring} & $H\ssi P_4$ & else except for 1 open case: $H=2P_2$\\
\hline
\textsc{Star Colouring} & $H\ssi P_4$ & else except for 1 open case: $H=2P_2$\\
\hline
\textsc{Injective Colouring} & $H\ssin 2P_1+P_4$& else except for 1 open case: $H=2P_1+P_4$\\
\hline
\textsc{$k$-Colouring} (see~\cite{CHSZ18,GJPS17,KMMNPS18}) & depends on $k$ & infinitely many open cases for all $k\geq 3$\\
\hline
\textsc{Acyclic $k$-Colouring} $(k\geq 3$) & $H$ is a linear forest & else\\
\hline
\textsc{Star $k$-Colouring} $(k\geq 3)$ & $H$ is a linear forest & else\\
\hline
\textsc{Injective $k$-Colouring} $(k\geq 4)$ & $H$ is a linear forest & else\\
\hline
\end{tabular}\\[5pt]
\caption{The state-of-the-art for the three problems in this paper and the original \textsc{Colouring} problem; both when $k$ is fixed and part of the input.
The only open case for \textsc{Acyclic Colouring} and \textsc{Star Colouring} is $H=2P_2$. The only open case for \textsc{Injective Colouring} is $H=2P_1+P_4$.}\label{t-thetable}
\end{table}
\end{center}

\section{A General Polynomial Result}\label{s-gen}

A \emph{biclique} or \emph{complete bipartite graph} is a bipartite graph on vertex set $S\cup T$, such that $S$ and $T$ are independent sets and there is an edge between every vertex of $S$ and every vertex of $T$; if $|S|=s$ and $|T|=t$, we denote this graph by $K_{s,t}$ , and if $s=t$, the biclique is \emph{balanced} and of \emph{order}~$s$.
We say that a colouring $c$ of a graph $G$ satisfies the \emph{balance biclique condition} (BB-condition) if $c$ uses at least $k+1$ colours to colour $G$, where $k$ is the order of a largest biclique that is contained in $G$ as a (not necessarily induced) subgraph.

Let $\pi$ be some colouring property; e.g., $\pi$ could mean being acyclic, star or injective. Then $\pi$ \emph{can be expressed in MSO$_2$} (monadic second-order logic with edge sets) if for every $k\geq 1$, the graph property of having a $k$-colouring with property~$\pi$ can be expressed in MSO$_2$.
The general problem \textsc{Colouring$(\pi)$} is to decide, on a graph $G$ and integer $k\geq 1$, if~$G$ has a $k$-colouring with property~$\pi$.
If $k$ is fixed, we write \textsc{$k$-Colouring($\pi$)}. We now prove the following result.

\begin{theorem}\label{t-general}
Let $H$ be a linear forest, and let $\pi$ be a colouring property that can be expressed in MSO$_2$, such that every colouring with property $\pi$ satisfies the BB-condition.
Then, for every integer $k\geq 1$, $k$-{\sc Colouring($\pi$)} is linear-time solvable for $H$-free graphs.
\end{theorem}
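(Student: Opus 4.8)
The plan is to reduce $k$-\textsc{Colouring($\pi$)} on $H$-free graphs to a model-checking problem on graphs of bounded treewidth, and then to invoke Courcelle's theorem. I would carry this out in three steps. First, I would convert $H$-freeness into $P_r$-freeness. Since $H$ is a linear forest, it is an induced subgraph of a sufficiently long path: if $H$ has $p$ vertices and $c$ components, then $H\ssi P_r$ for $r=p+c-1$ (take $P_r$ and delete one vertex in each of the $c-1$ gaps separating the path-segments). Hence any graph containing $P_r$ as an induced subgraph also contains $H$, so every $H$-free graph is $P_r$-free. It therefore suffices to solve the problem on $P_r$-free graphs for this fixed value of $r$.

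Second, I would bring in the structural dichotomy of Atminas et al.~\cite{ALR12}: for all integers $s,t$ there is a constant $w(s,t)$ such that every graph with no induced $P_s$ and no (not necessarily induced) $K_{t,t}$ subgraph has treewidth at most $w(s,t)$. Applying this with $s=r$ and $t=k$, I set $w=w(r,k)$, which is a constant because $r$ (equivalently $H$) and $k$ are both fixed. I would then run Bodlaender's linear-time algorithm to test whether $\mathrm{tw}(G)\le w$ and, in the positive case, to output a tree decomposition of width at most $w$.

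Third, I would use the BB-condition to dispatch the high-treewidth case. If $\mathrm{tw}(G)>w$, then by the contrapositive of the result above, and since $G$ has no induced $P_r$, the graph $G$ must contain $K_{k,k}$ as a subgraph; thus the largest balanced biclique subgraph of $G$ has order at least~$k$. By hypothesis every $\pi$-colouring of $G$ satisfies the BB-condition, hence uses at least $k+1$ colours, so $G$ has no $k$-colouring with property~$\pi$ and I answer \textsc{no}. If instead $\mathrm{tw}(G)\le w$, I have a width-$w$ tree decomposition in hand; since $\pi$ is expressible in MSO$_2$, for the fixed value~$k$ the property ``$G$ has a $k$-colouring with property~$\pi$'' is a fixed MSO$_2$ sentence, and Courcelle's theorem decides it in linear time. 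Combining Bodlaender's and Courcelle's linear-time routines yields an overall linear-time algorithm.

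The crux of the argument, and the reason the two hypotheses are exactly the right ones, lies in this final step: the single structural obstruction that the theorem of~\cite{ALR12} leaves open on $P_r$-free graphs, namely a large biclique subgraph, is precisely the obstruction that the BB-condition converts into an immediate negative answer, while MSO$_2$-expressibility handles everything that survives into bounded treewidth. I expect the main obstacle to be the correct invocation and matching of this structural dichotomy, together with verifying that $w$ depends only on the fixed quantities $r$ and $k$, so that the constants are genuinely absorbed into the linear running time; the combinatorial input from~\cite{ALR12} does the heavy lifting here.
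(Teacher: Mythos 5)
Your proposal is correct and follows essentially the same route as the paper's proof: reduce $H$-freeness to $P_r$-freeness, apply the Atminas--Lozin--Razgon treewidth/biclique dichotomy, test the treewidth with Bodlaender's algorithm, use the BB-condition to answer \textsc{no} in the high-treewidth case, and invoke Courcelle's theorem otherwise. The only difference is cosmetic: you give an explicit value $r=p+c-1$ for the enclosing path, whereas the paper simply takes the smallest $\ell$ with $H\ssi P_\ell$.
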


\begin{proof}
Atminas, Lozin and Razgon~\cite{ALR12} proved that that for every pair of integers $\ell$ and~$k$, there exists a constant $b(\ell,k)$
such that every graph of treewidth at least $b(\ell,k)$ contains an induced $P_\ell$ or a (not necessarily induced) biclique $K_{k,k}$.
Let $G$ be an $H$-free graph, and let $\ell$ be the smallest integer such that $H\ssi P_\ell$; observe that $\ell$ is a constant.
Hence, we can use Bodlaender's algorithm~\cite{Bo96} to test in linear time if $G$ has treewidth at most $b(\ell,k)-1$.

First suppose that the treewidth of $G$ is at most $b(\ell,k)-1$. As $\pi$ can be expressed in MSO$_2$,
the result of Courcelle~\cite{Co90} allows us to test in linear time whether $G$ has a $k$-colouring with property~$\pi$.
Now suppose that the treewidth of $G$ is at least $b(\ell,k)$. As $G$ is $H$-free, $G$ is $P_\ell$-free.
Then, by the result of Atminas, Lozin and Razgon~\cite{ALR12}, we find that $G$ contains $K_{k,k}$ as a subgraph.
As $\pi$ satisfies the BB-condition, $G$ has no $k$-colouring with property $\pi$.
\end{proof}

\noindent
We now apply Theorem~\ref{t-general} to obtain the polynomial cases for fixed $k$ in Theorem~\ref{t-acyclic}--\ref{t-injective}.

\begin{corollary}\label{c-linearforest}
Let $H$ be a linear forest. For every $k\geq 1$, {\sc Acyclic $k$-Colouring}, {\sc Star $k$-Colouring} and {\sc Injective $k$-Colouring} are polynomial-time solvable for $H$-free graphs.
\end{corollary}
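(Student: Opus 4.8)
The plan is to derive the corollary directly from Theorem~\ref{t-general}, applied with $\pi$ equal to each of the three properties in turn. Since $H$ is a linear forest by hypothesis, the only thing I would need to check for each $\pi\in\{\text{acyclic},\text{star},\text{injective}\}$ is (a) that $\pi$ can be expressed in MSO$_2$, and (b) that every colouring with property $\pi$ satisfies the BB-condition. For (a), I would encode a $k$-colouring by $k$ monadic variables $V_1,\dots,V_k$ partitioning $V(G)$ into independent sets, and then impose on the subgraph induced by $V_i\cup V_j$, for every pair $i<j$, a condition that is MSO$_2$-definable: acyclicity says this subgraph contains no cycle; being a star forest says it contains no cycle and no $P_4$ as a (not necessarily induced) subgraph; being an $sP_1+tP_2$ says every vertex has degree at most~$1$, equivalently the subgraph contains no $P_3$. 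Each of these is a standard MSO$_2$ property, so (a) holds in all three cases.

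For (b), I would first observe that every injective colouring is a star colouring and every star colouring is an acyclic colouring, so it suffices to verify the BB-condition for \emph{acyclic} colourings: if every acyclic colouring uses at least $k+1$ colours whenever $G$ contains a balanced biclique of order~$k$ as a subgraph, then the same conclusion holds a fortiori for the more restrictive star and injective colourings. So suppose $K_{k,k}$ is contained in $G$ as a subgraph, with parts $A=\{a_1,\dots,a_k\}$ and $B=\{b_1,\dots,b_k\}$, and suppose towards a contradiction that an acyclic colouring $c$ uses at most $k$ colours. Since every vertex of $A$ is adjacent to every vertex of $B$ and $c$ is proper, the colours appearing on $A$ are disjoint from those appearing on $B$; as both sets are nonempty and together have size at most~$k$, each has size at most $k-1$.

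The core of the argument is then a pigeonhole step. Because $|A|=k$ exceeds the number of colours used on $A$, some colour $\alpha$ is repeated on $A$, say $c(a_i)=c(a_{i'})=\alpha$ with $i\neq i'$; likewise some colour $\beta$ is repeated on $B$, say $c(b_j)=c(b_{j'})=\beta$ with $j\neq j'$. But then $a_i\,b_j\,a_{i'}\,b_{j'}\,a_i$ is a $4$-cycle all of whose vertices lie in the two colour classes $\alpha$ and $\beta$, so the subgraph induced by these classes contains a cycle, contradicting acyclicity. Hence $c$ must use at least $k+1$ colours, which establishes the BB-condition. With (a) and (b) in hand for all three properties, Theorem~\ref{t-general} supplies the linear-time algorithm in each case, and the corollary follows.

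I expect the BB-condition to be the only substantive point; the MSO$_2$ encodings are routine, and the idea worth isolating is that a repeated colour in \emph{each} part of a large biclique immediately produces a bichromatic $4$-cycle, which is already forbidden by the weakest of the three properties, so a single argument covers all three.
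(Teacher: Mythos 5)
Your proposal is correct and takes essentially the same approach as the paper: both apply Theorem~\ref{t-general} after noting MSO$_2$-expressibility and verifying the BB-condition via the observation that a repeated colour on each side of a $K_{k,k}$ subgraph would create a bichromatic $4$-cycle, which already violates acyclicity (the weakest of the three properties). The paper phrases this more tersely---one bipartition class must be rainbow-coloured, forcing at least $k+1$ colours---but the underlying argument is identical to your pigeonhole formulation.
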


\begin{proof}
All three kinds of colourings use at least $s$ colours to colour $K_{s,s}$ (as the vertices from one bipartition class of $K_{s,s}$ must receive unique colours). Hence, every acyclic, star and injective colouring of every graph satisfies the BB-condition. Moreover, it is readily seen that the colouring properties of being acyclic, star or injective can all be expressed in MSO$_2$. Hence, we may apply Theorem~\ref{t-general}.
\end{proof}

\section{Acyclic Colouring}\label{s-acyclic}

In this section, we prove Theorem~\ref{t-acyclic}. For a graph $G$ and a colouring $c$, the pair $(G,c)$ has a \emph{bichromatic} cycle~$C$ if $C$ is a cycle of $G$ with
$|c(V(C)|=2$, that is, the vertices of $C$ are coloured by two alternating colours (so $C$ is even).
The notion of a \emph{bichromatic path} is defined in a similar matter.

\begin{lemma}\label{l-girth}
For every $k\geq 3$ and every $g\geq 3$, {\sc Acyclic $k$-Colouring} is \NP-complete for $2$-degenerate bipartite graphs of girth at least $g$.
\end{lemma}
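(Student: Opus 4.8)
I would begin with the routine half. Membership in \NP{} is immediate: a certificate is a $k$-colouring, and since $k$ is fixed one can check in polynomial time that it is proper and that, for each of the at most $\binom{k}{2}$ pairs of colour classes, the induced subgraph is a forest (i.e., contains no bichromatic cycle), by computing connected components. So the work is entirely in the hardness reduction.

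For hardness I would run a \emph{girth-increasing} reduction whose source is {\sc Acyclic $k$-Colouring} on bipartite graphs, shown \NP-complete by Coleman and Cai~\cite{CC86}, and whose mechanism is the edge-replacement idea used by Kr\'al' et al.~\cite{KKTW01} and Emden-Weinert et al.~\cite{EHK98} for ordinary {\sc Colouring}, adapted to the acyclic setting. Given a bipartite instance $G$ with a fixed bipartition, I would build $G'$ by replacing each edge $uv$ of $G$ by a gadget $F_{uv}$ with two terminals $a,b$ identified with $u,v$, choosing $F_{uv}$ to be bipartite (with $a,b$ in opposite classes, consistent with the bipartition of $G$), of girth at least $g$, and with all internal vertices of degree $2$. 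Two structural observations then come for free. First, $2$-degeneracy: in any subgraph of $G'$ one can repeatedly delete a degree-$2$ internal gadget vertex until only an independent set of original vertices survives, so every subgraph has a vertex of degree at most $2$. Second, bipartiteness and girth are guaranteed by the choice of gadget, and making each gadget long enough pushes every cycle length above $g$.

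The heart of the matter is the design of $F_{uv}$ so that (i) in every acyclic $k$-colouring of $G'$ the terminals satisfy $c(a)\neq c(b)$, so that $F_{uv}$ faithfully simulates the original edge, and (ii) every acyclic $k$-colouring of $G$ extends to one of $G'$ by colouring the gadget interiors, with no bichromatic cycle created. The crucial subtlety, and the reason this is \emph{more intricate} than the ordinary case, is that $G'$ must be \emph{bipartite}: the inequality $c(a)\neq c(b)$ cannot be forced by large ordinary chromatic number as in~\cite{KKTW01,EHK98}, but must instead be forced by bichromatic-cycle obstructions. Thus the building block $F_{uv}$ plays the role that ``high girth, large chromatic number'' graphs play for {\sc Colouring}, but here it must be a bipartite, $2$-degenerate, high-girth graph whose \emph{acyclic} chromatic number, rooted at the terminals, exceeds $k$; establishing the existence of such blocks and verifying (i) by a case analysis showing that $c(a)=c(b)$ propagates into a short bichromatic cycle inside $F_{uv}$, while exhibiting explicit interior colourings for (ii), is the technical core.

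The main obstacle I anticipate is the \emph{global} control of bichromatic cycles rather than their local avoidance within a single gadget. A bichromatic cycle of $G'$ can thread through several gadgets, tracing out a cycle of $G$; such a cycle must be excluded in the forward direction yet remain avoidable in the backward direction, so it is not enough that each $F_{uv}$ be internally acyclically colourable. I would resolve this by arranging that any cycle of $G'$ meeting a gadget is forced to receive at least three colours, thereby localising the acyclicity condition to individual gadgets: the long degree-$2$ internal paths supply exactly the slack needed to break two-colour alternation along any cycle inherited from $G$, while the terminal-forcing property of (i) preserves the equivalence ``$G$ has an acyclic $k$-colouring $\iff$ $G'$ has one.'' Assembling these pieces yields the claimed \NP-completeness for $2$-degenerate bipartite graphs of girth at least $g$, for every $k\geq 3$ and $g\geq 3$.
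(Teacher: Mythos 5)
There is a genuine gap, and in fact the specific gadget shape you commit to cannot work. You require $F_{uv}$ to have \emph{all internal vertices of degree $2$}; then $F_{uv}$ minus its two terminals is a disjoint union of paths, i.e., $F_{uv}$ is a generalized theta graph with long legs. Such a gadget can never enforce your property (i), namely $c(a)\neq c(b)$ in every acyclic $k$-colouring: set $c(a)=c(b)=1$ and colour the interior of each internal $a$--$b$ path properly so that it uses at least two colours from $\{2,\ldots,k\}$ (possible since $k\geq 3$ and the paths are long). Then no internal $a$--$b$ path is bichromatic, so no cycle of the gadget is bichromatic, and the colouring with $c(a)=c(b)$ is acyclic on $F_{uv}$. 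Beyond this, even if you drop the degree-$2$ restriction, you never construct the disequality gadget; you explicitly defer ``establishing the existence of such blocks'' to ``the technical core,'' but that existence question \emph{is} the lemma. The \NP-membership observation and the degeneracy/girth bookkeeping are the routine parts; the forcing gadget is the whole proof.

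The paper avoids the disequality gadget entirely. It keeps the original edges of $G$ and instead \emph{splits each vertex} $z$ of degree $d$ into copies $z_1,\ldots,z_d$, chained together by an \emph{equality} gadget $S'$ with two terminals $x_1,x_3$ that every acyclic $k$-colouring is forced to colour alike, and moreover joins by a bichromatic $x_1$--$x_3$ path in every colour pair (so that a putative edge between differently-, or equally-, coloured neighbours behaves exactly as in $G$). Crucially, $S'$ is obtained non-constructively: start from a girth-$\geq g$ graph $F$ with no $2k(k-1)$-colouring (Erd\H{o}s), subdivide every edge once to get a $2$-degenerate bipartite graph $S$ of girth $\geq g$ that provably has no acyclic $k$-colouring (via an arboricity argument), and delete subdivision vertices one by one until the graph first becomes acyclically $k$-colourable; the two neighbours of the last deleted vertex are the terminals, and minimality yields the forcing properties. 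If you want to salvage your plan, you would need either to prove the existence of a bipartite, $2$-degenerate, high-girth disequality gadget (which you have not done and which your degree-$2$ constraint rules out), or to switch to an equality-gadget/vertex-splitting scheme of this kind.
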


\begin{proof}
We reduce from \textsc{Acyclic $k$-Colouring}, which is known to be \NP-complete for bipartite graphs for every $k\geq 3$~\cite{CC86}.
Recall that the \emph{arboricity} of a graph is the minimum number of forests needed to partition its edge set.
By counting the edges, a graph with arboricity at most $t$ is $(2t-1)$-degenerate and thus $2t$-colourable.
We start by taking a graph $F$ that has no $2k(k-1)$-colouring and that is of girth at least~$g$.
By a seminal result of Erd\H{o}s~\cite{Er59}, such a graph $F$ exists (and its size is constant, as it only depends on $g$ and $k$ which are fixed integers).
Notice that $F$ does not admit a vertex-partition into $k$ subgraphs with arboricity at most $k-1$, since otherwise $F$ would be $2k(k-1)$-colourable.

Now we consider the graph~$S$ obtained by subdividing every edge of $F$ exactly once.
The graph $S$ is $2$-degenerate and bipartite with the \emph{old} vertices from $F$ in one part and the \emph{new} vertices of degree~$2$ in the other part.
Moreover, $S$ has girth at least~$g$, as $F$ has girth at least~$g$.

We claim that $S$ has no acyclic $k$-colouring.
For contradiction, assume that $S$ has an acyclic $k$-colouring. Assign the colour of every old vertex to the corresponding vertex of $F$
and assign the colour of every new vertex to the corresponding edge of $F$.
For every colour $i$, we consider the subgraph $F_i$ of $F$ induced by the vertices coloured $i$.
For every $j\neq i$, the subgraph of $S$ induced by the colours $i$ and $j$ is a forest.
This implies that the subgraph of $F_i$ induces by the edges coloured $j$ is a forest.
So the arboricity of $F_i$ is at most $k-1$, that is, the number of colours distinct from $i$.
By previous discussion, the chromatic number of $F_i$ is at most $2(k-1)$, so that $F$ is $2k(k-1)$-colourable.
This contradiction shows that $S$ has no acyclic $k$-colouring.

We repeatedly remove new vertices from $S$ until we obtain a graph $S'$ that is acyclically $k$-colourable.
Note that $S'$ has girth at least~$g$ 
and is $2$-degenerate, as $S$ has girth at least~$g$ and is $2$-degenerate.
Let $x_2$ be the last vertex that we removed and let $x_1$ and $x_3$ be the neighbours of $x_2$ in $S$.
By construction, $S'$ is acyclically $k$-colourable and every acyclic $k$-colouring $c$ of $S'$ is such that:

\begin{itemize}
 \item $c(x_1)=c(x_3)$, since otherwise setting $c(x_2)\not\in\{c(x_1),c(x_3)\}$ would extend $c$ to an acyclic $k$-colouring of the larger graph, which is not possible by construction. Without loss of generality, $c(x_1)=c(x_3)=1$.
 \item For every colour $i\neq 1$, $S'$ contains a bichromatic path coloured $1$ and $i$ between $x_1$ and $x_3$, since otherwise setting $c(x_2)=i$ would extend $c$ to an acyclic $k$-colouring of the larger graph again.
\end{itemize}

\noindent
We are ready to describe the reduction. Let $G$ be a bipartite instance of \textsc{Acyclic $k$-Colouring}.
We construct an equivalent instance $G'$ with girth at least~$g$ as follows.
For every vertex $z$ of $G$, we fix an arbitrary order on the neighbours of $z$.
We replace $z$ of $G$ by $d$ vertices $z_1,z_2,\ldots,z_d$, where $d$ is the degree of $z$.
Then for $1\leq i\leq d-1$, we take a copy of $S'$ and we identify the vertex $x_1$ of $S'$ with $z_i$ and the vertex $x_3$ of $S'$ with $z_{i+1}$.
%DP: m -> u and n -> v
Now for every edge $uv$ of $G$, say $v$ is the $i^{th}$ neighbour of $u$ and $u$ is the $j^{th}$ neighbour of $v$, we add the edge 
$u_iv_j$ in $G'$. See also Figure~\ref{f-latest}.

Given an acyclic $k$-colouring of $G$, we assign the colour of $z$ to $z_1,\ldots,z_d$ and extend the colouring to the copies of $F'$, which gives an acyclic colouring of $G'$.
Given an acyclic $k$-colouring of $G'$, the copies of $F'$ force the same colour on $z_1,\ldots,z_d$ and we assign this common colour to $z$, which gives an acyclic $k$-colouring of $G$.

Finally, notice that since $G$ and $S'$ are bipartite, $G'$ is bipartite.
As $S'$ is $2$-degenerate and has girth at least~$g$, we find that $G'$ is $2$-degenerate and has girth at least $g$.
\end{proof}

\begin{figure}[h]
\centering
\includegraphics[width=1\textwidth]{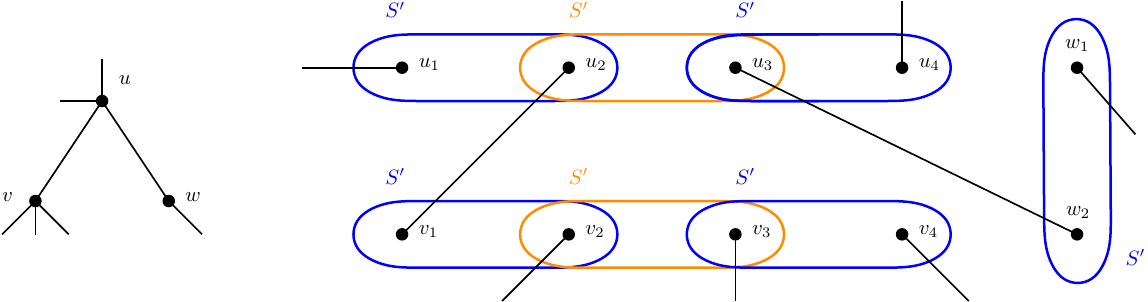}
\caption{An example of part of a graph $G$ (left) and the corresponding part in $G'$ (right). In the part of $G'$ corresponding to vertex $u$, vertex~$u_1$ is identified with $x_1$ of the left copy of $S'$; vertex $u_2$ with $x_3$ of the left copy of $S'$ and $x_1$  of the middle copy of $S'$; vertex $u_3$ with $x_3$ of the middle copy of $S'$ and $x_1$ of the right copy of $S'$; and $u_4$  with $x_3$ of the right copy of $S'$.}
\label{f-latest}
\end{figure}

\noindent
The \emph{line graph} of a graph $G$ has vertex set $E(G)$ and an edge between two vertices $e$ and $f$ if and only if $e$ and $f$ share an end-vertex of $G$.
We now modify the construction of~\cite{AZ02} for line graphs from $k=3$ to $k\geq 3$.

\begin{lemma}\label{l-az}
For every $k\geq 3$, {\sc Acyclic $k$-Colouring} is \NP-complete for line graphs of multigraphs.
\end{lemma}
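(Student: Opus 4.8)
The plan is to prove that \textsc{Acyclic $k$-Colouring} is \NP-complete for line graphs of multigraphs for every $k \geq 3$ by reducing from \textsc{Acyclic $3$-Colouring} for line graphs, which Alon and Zaks~\cite{AZ02} established for line graphs of maximum degree~$4$. An acyclic colouring of a line graph $L(G)$ corresponds to a colouring of the edges of the multigraph $G$ in which adjacent edges receive distinct colours (a proper edge-colouring) and, additionally, no two colour classes together span a bichromatic cycle in $L(G)$. In the line-graph setting, a cycle in $L(G)$ that alternates between two colours corresponds to a closed alternating walk in $G$, so the acyclicity condition forbids \emph{bichromatic even closed walks} of the appropriate form; this is the structural invariant the reduction must preserve. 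The key observation driving the construction of~\cite{AZ02} for $k=3$ is that one can build multigraph gadgets whose line graphs force particular local colour patterns, and the task is to extend these gadgets so that the forcing still works when $k-3$ additional colours are available.

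First I would recall the precise gadget behaviour from the $k=3$ construction: Alon and Zaks encode a \textsc{Not-All-Equal} or similar satisfiability instance using multiedges and small multigraph gadgets whose line graphs have essentially forced acyclic $3$-colourings, with the freedom in the colouring corresponding to truth assignments. To lift this to arbitrary $k \geq 3$, I would take the multigraph produced by the $k=3$ reduction and \emph{pad} it so that the extra colours $4,\dots,k$ cannot help the colourer. The standard device is to attach, to each relevant vertex or gadget, a structure forcing the use of colours $4,\dots,k$ elsewhere, or equivalently to saturate the local neighbourhood so that the proper edge-colouring constraint already rules out the new colours at the edges that matter. Concretely, I would increase edge multiplicities or append blocks of parallel edges (or high-multiplicity bundles, i.e.\ gadgets based on multiple edges between two vertices whose line graphs are large cliques) so that at every edge whose colour encodes information, exactly three colours remain available, thereby reproducing the $3$-colour dynamics locally while globally using $k$ colours. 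Because line graphs of multigraphs are claw-free, the resulting class suffices for the $H$-free dichotomy argument outlined in the introduction.

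The main steps in order are: (1) state the reduction from the $k=3$ instance (a multigraph $G_3$ whose line graph $L(G_3)$ is acyclically $3$-colourable iff the source formula is satisfiable); (2) construct $G_k$ from $G_3$ by adding colour-forcing gadgets so that, in any acyclic $k$-colouring of $L(G_k)$, the colours available at the edges of the original $G_3$ are restricted to a fixed $3$-element palette; (3) verify that a satisfying assignment yields an acyclic $k$-colouring of $L(G_k)$ by colouring the core as in the $k=3$ case and the padding gadgets with the forced palette; and (4) verify the converse, namely that any acyclic $k$-colouring of $L(G_k)$ restricts to an acyclic $3$-colouring of $L(G_3)$, using the fact that no bichromatic cycle involving the padding can arise and that the forced palette collapses the colouring on the core to three colours. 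I would also confirm $G_k$ remains a multigraph of bounded degree so that $L(G_k)$ is indeed a line graph of a multigraph.

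The hard part will be step~(2): designing padding gadgets that genuinely force a fixed three-colour palette on the core edges \emph{without} themselves introducing new bichromatic cycles in $L(G_k)$. Since acyclicity is a global, two-colour-class condition rather than a purely local one, one must ensure that the interaction between a padding gadget and the core cannot create an alternating cycle through colours that are supposed to be confined to the gadget. Controlling these bichromatic cycles across the gadget boundary---while simultaneously guaranteeing that the extra colours are unavailable on the core but available (and forced) on the padding---is the delicate balancing act. A clean way to handle this is to make each padding block a clique-like bundle of parallel edges sharing an endpoint with the core, so that properness alone (not acyclicity) already eliminates the extra colours at the core edges, reducing the remaining verification to the already-understood $k=3$ behaviour.
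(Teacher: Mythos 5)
Your proposal takes a genuinely different route from the paper, and the route has a gap at exactly the step you flag as ``the hard part.'' The paper does not reduce from \textsc{Acyclic $3$-Colouring} and pad; it reduces from proper \textsc{$k$-Edge Colouring} (already \NP-complete for every $k\geq 3$ by Holyer and Leven--Galil), using the correspondence between acyclic $k$-edge colourings of a multigraph and acyclic $k$-colourings of its line graph. Each edge $uv$ of the instance is replaced by a fixed gadget multigraph $F_k$ (with some bundles of multiplicity $k-2$) for which one proves two things: every acyclic $k$-edge colouring of $F_k$ forces its two end-edges to receive the same colour, and $F_k$ admits such a colouring with \emph{no bichromatic path between its two attachment vertices}, so that gluing gadgets along the edges of $G$ cannot create bichromatic cycles. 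The gadget thus transmits a single colour between endpoints rather than restricting a palette.

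The concrete obstruction to your padding plan is that there is no edge-colouring analogue of a dominating clique. A bundle of parallel edges attached at a vertex $u$ constrains only the edges incident to $u$, and properness alone does not determine \emph{which} $k-3$ colours the bundle receives: the colouring may permute them. So even locally the excluded palette is not pinned down, and globally nothing forces distant core edges to avoid the \emph{same} $k-3$ colours. The restriction of an acyclic $k$-colouring of $L(G_k)$ to the core would then be a colouring in which every edge uses one of three colours from an edge-dependent list, which is not an acyclic $3$-colouring of $L(G_3)$, so your step (4) fails. Synchronizing the forbidden palette across the whole graph would require colour-propagation gadgets whose behaviour under the acyclicity condition must be analysed from scratch --- at which point you are essentially forced to design and verify a gadget like the paper's $F_k$ anyway, including the bichromatic-path analysis across gadget boundaries that you correctly identify as delicate. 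If you want to salvage your outline, replace the padding idea by a direct reduction from proper $k$-edge colouring with an explicit colour-transmitting gadget and prove the two gadget properties above.
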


\begin{proof}
For an integer $k\geq 1$, a \emph{$k$-edge colouring} of a graph $G=(V,E)$ is a mapping $c:E\to \{1,\ldots,k\}$ such that $c(e)\neq c(f)$ whenever the edges $e$ and $f$ share an end-vertex. A \emph{colour class} consists of all edges of $G$ that are mapped by $c$ to a specific colour~$i$.
For a fixed integer~$k\geq 1$, the \textsc{Acyclic $k$-Edge Colouring} problem is to decide if a given graph has an acyclic $k$-edge colouring. Alon and Zaks proved that \textsc{Acyclic $3$-Edge Colouring} is \NP-complete. We note that a graph has an acyclic $k$-edge colouring if and only if its line graph has an acyclic $k$-colouring. Hence, it remains to generalize the construction of Alon and Zaks~\cite{AZ02} from $k=3$ to $k\geq 3$. Our main tool is the gadget graph $F_k$, depicted in Figure~\ref{fig:clawgadget}, about which we prove the following two claims.

\medskip
\noindent
\emph{(i) The edges of $F_k$ can be coloured acyclically using $k$ colours, with no bichromatic path between $v_1$ and $v_{14}$.}

\smallskip
\noindent
\emph{(ii) Every acyclic $k$-edge colouring of $F_k$ using $k$ colours assigns $e_1$ and $e_2$ the same colour.}

\begin{figure}[h]
\centering
\includegraphics[width=0.7\textwidth]{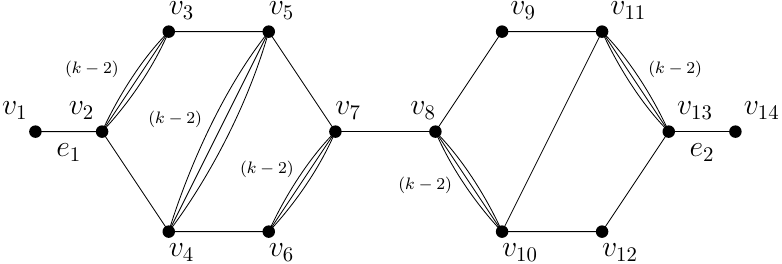}
\caption{The gadget multigraph $F_k$. The labels on edges are multiplicities.}
\label{fig:clawgadget}
\end{figure}

\smallskip
\noindent
We first prove (ii). We assume, without loss of generality, that $v_1v_2$ is coloured by $1$, $v_2v_4$ by $2$ and the edges between $v_2$ and $v_3$ by colours $3,\ldots, k$.
The edge $v_3v_5$ has to be coloured by $1$, otherwise we have a bichromatic cycle on $v_2v_3v_5v_4$. This necessarily implies that

\begin{itemize}
\item the edges between $v_4$ and $v_5$ are coloured by $3,\ldots, k$,
\item the edge $v_5v_7$ is coloured by $2$,
\item the edge $v_4v_6$ is coloured by $1$,
\item the edges between $v_6$ and $v_7$ are coloured by $3,\ldots, k$, and
\item the edge $v_7v_8$ is coloured by $1$.
\end{itemize}

\noindent
Now assume that the edge $v_8v_9$ is coloured by $a \in \{2,\ldots,k\}$ and the edges between $v_8$ and $v_{10}$ by colours from the set $A$, where $A =\{2,\ldots,k\} \setminus a$. The edge $v_{10}v_{11}$ is either coloured $a$ or $1$. However, if it is coloured $1$, $v_9v_{11}$ is assigned a colour $b \in A$ and necessarily we have either a bichromatic cycle on $v_8v_9v_{11}v_{13}v_{12}v_{10}$, coloured by $b$ and $a$, or a bichromatic cycle on $v_{10}v_{11}v_{13}v_{12}$, coloured by $a$ and $1$. Thus $v_{10}v_{11}$ is coloured by $a$. To prevent a bichromatic cycle on $v_8v_9v_{11}v_{10}$, the edge $v_9v_{11}$ is assigned colour $1$. The rest of the colouring is now determined as $v_{10}v_{12}$ has to be coloured by $1$, the edges between $v_{11}$ and $v_{13}$ by $A$, $v_{12}v_{13}$ by $a$, and $v_{13}v_{14}$ by $1$. We then have a $k$-colouring with no bichromatic cycles of size at least $3$ in $F_k$ for every possible choice of $a$. This proves that $v_1v_2$ and $v_{13}v_{14}$ are coloured alike under every acyclic $k$-edge colouring. 

We prove (i) by choosing $a$ different from $2$. Then there is no bichromatic path between $v_1$ and $v_{14}$.

We now reduce from \textsc{$k$-Edge-Colouring} to \textsc{Acyclic $k$-Edge Colouring} as follows. Given an instance $G$ of \textsc{$k$-edge Colouring} we construct an instance $G^{\prime}$ of \textsc{Acyclic $k$-Edge Colouring} by replacing each edge $uv$ in $G$ by a copy of $F_k$ where $u$ is identified with $v_1$ and $v$ is identified with $v_{14}$.

If $G^{\prime}$ has an acyclic $k$-edge colouring $c^{\prime}$ then we obtain a $k$-edge colouring $c$ of $G$ by setting $c(uv)=c^{\prime}(e_1)$ where $e_1$ belongs to the gadget $F_k$ corresponding to the edge $uv$. If $G$ has a $k$-edge colouring $c$ then we obtain an acyclic $k$-edge colouring $c^{\prime}$ of $G^{\prime}$ by setting $c^{\prime}(e_1)=c(uv)$ where $e_1$ belongs to the gadget corresponding to the edge $uv$. The remainder of each gadget $F_k$ can then be coloured as described above.
\end{proof}

\noindent
In our next result, $k$ is part of the input.
Recall that a graph is \emph{co-bipartite} if it is the complement of a bipartite graph.
As bipartite graphs are $C_3$-free, co-bipartite graphs are $3P_1$-free.

\begin{lemma}\label{l-3p1acyclic}
{\sc Acyclic Colouring} is \NP-complete for co-bipartite graphs.
\end{lemma}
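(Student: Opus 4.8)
The plan is to exploit the fact that a co-bipartite graph $G$ has its vertex set covered by two cliques $A$ and $B$: in \emph{any} proper colouring every colour class is an independent set meeting each clique at most once, so every colour class has at most two vertices, one in $A$ and one in $B$. First I would record the consequence for acyclicity. The union of two colour classes spans at most four vertices, and a bichromatic cycle is necessarily even and hence of length at least four, so the \emph{only} possible bichromatic cycle is a $C_4$. Such a $C_4$ can arise only between two colour classes that each use two vertices, say $\{a_1,b_1\}$ and $\{a_2,b_2\}$ with $a_1,a_2\in A$ and $b_1,b_2\in B$: the edges $a_1a_2$ and $b_1b_2$ are forced by the cliques, while $a_1b_1$ and $a_2b_2$ are non-edges (same colour), so the cycle is present precisely when \emph{both} cross pairs $a_1b_2$ and $a_2b_1$ are edges of $G$. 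Thus a proper colouring is acyclic exactly when the $A$--$B$ pairs sharing a colour form a matching $M$ in the bipartite complement $\overline{G}$ (taken between $A$ and $B$) in which no two edges induce a $2P_2$ of $\overline{G}$; I call such an $M$ \emph{conflict-free}.

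Translating the colour count, a proper colouring realising such a matching $M$ uses exactly $|V(G)|-|M|$ colours, so $G$ has an acyclic $k$-colouring if and only if $\overline{G}$ admits a conflict-free matching of size at least $|V(G)|-k$. Since $k$ is part of the input, the lemma reduces to showing that deciding the existence of a large conflict-free matching in a bipartite graph is \NP-hard. I would obtain this by a reduction from a suitable \NP-complete problem (such as \textsc{Clique}, or a restricted satisfiability problem): given $(\Gamma,\ell)$, introduce for each vertex $p$ of $\Gamma$ a \emph{candidate pair} $(a_p,b_p)$ realised as a non-edge of $G$, and choose the remaining cross non-edges so that two candidate pairs are compatible (fail to induce a $2P_2$) exactly when the corresponding vertices of $\Gamma$ are adjacent. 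A conflict-free matching using only candidate pairs then corresponds precisely to a clique of $\Gamma$, and setting $k=|V(G)|-\ell$ would link an acyclic $k$-colouring of $G$ to a clique of size $\ell$.

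The hard part will be ruling out \emph{unintended} colour-sharings. The compatibility non-edges are themselves matchable, so a conflict-free matching may use ``cross'' pairs $a_pb_q$ to shift savings along paths of $\Gamma$, yielding a matching strictly larger than any clique provides; a naive symmetric construction already fails for this reason. To defeat this I would attach a forcing gadget to each candidate pair --- for instance private twin vertices, or weighting by replacing single vertices with blocks of duplicates --- engineered so that in every \emph{maximum} conflict-free matching each candidate pair is either committed to its own block or left unused, never matched across blocks. The crux is then an exchange argument proving that the committed vertices must induce a clique of $\Gamma$ and that the maximum conflict-free matching has size $\omega(\Gamma)$ up to the fixed additive contribution of the gadgets, so that the colour budget $k$ decides whether $\omega(\Gamma)\ge\ell$. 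Membership in \NP\ is immediate, as a claimed acyclic colouring is checkable in polynomial time, which would complete the proof.
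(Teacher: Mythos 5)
Your structural analysis of acyclic colourings of co-bipartite graphs is correct and is exactly the paper's: colour classes have size at most two, the only possible bichromatic cycle is a $C_4$ spanning two size-two classes $\{a_1,b_1\}$ and $\{a_2,b_2\}$, and such a $C_4$ occurs precisely when the corresponding two edges of the bipartite complement induce a $2K_2$ there. So, as you say, $G$ has an acyclic $k$-colouring if and only if the bipartite graph $\overline{G}$ (between $A$ and $B$) has a matching of size $|V(G)|-k$ in which no two edges induce a $2K_2$. In the literature this is called a \emph{connected matching}, and the paper's proof is exactly this translation, applied with $k=n$ to a balanced bipartite graph on $2n$ vertices.

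The genuine gap is that you never establish the \NP-hardness of the matching problem you reduce to, and that is the entire content of the lemma once the translation is done. The paper does not prove it either: it cites Alon, Cohen, Griffiths, Manurangsi, Reichman, Shinkar, Wagner and Yu~\cite[Theorem 3.5]{ACGMRSWY20}, who showed that deciding whether a balanced bipartite graph on $2n$ vertices has a connected matching of size $n$ is \NP-complete. Your plan to prove hardness from scratch by a reduction from \textsc{Clique} is only a sketch: you yourself identify the failure mode of the naive construction (unintended ``cross'' pairs inflating the matching), and the proposed fix --- an unspecified ``forcing gadget'' of twins or duplicated blocks, followed by an unproved exchange argument --- is precisely where all the difficulty lives. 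As written, no concrete gadget is given, no verification that cross-block pairs are suboptimal is carried out, and the correspondence between maximum conflict-free matchings and $\omega(\Gamma)$ is asserted rather than proved. To complete the argument you must either supply that construction in full or invoke the known \NP-completeness of connected matching in bipartite graphs, as the paper does.
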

\begin{proof}
Alon et al.~\cite[Theorem 3.5]{ACGMRSWY20} proved that deciding if a balanced bipartite graph on $2n$ vertices has a connected matching of size $n$ is \NP-complete. A matching is called \emph{connected} if no two edges of the matching induce $2K_2$ in the given graph. We shall reduce from this problem to prove our theorem.

To this end, we claim that a balanced bipartite graph $G$ with parts $A$ and $B$ such that $|A|=|B|=n$ has a connected matching of size $n$ if and only if its complement has an acyclic colouring with $n$ colours.

Suppose that there is an acyclic colouring $c$ of $\overline{G}$ with $n$ colours. Clearly, such colouring uses $n$ colours on $A$ and $n$ colours on $B$. Vertices coloured with the same colour do not have an edge between them in $\overline{G}$ and thus are connected by an edge in $G$. Let us take the set of edges formed by each of the $n$ colour classes. By the property of colouring, this is a matching in $G$ and it is of size $n$. To see that it is also connected, suppose for a contradiction that there are two edges of the matching, say $a_1b_1$ and $a_2b_2$, forming an induced $2K_2$ in $G$. Without loss of generality, $c(a_1)=c(b_1)=1$ and $c(a_2)=c(b_2)=2$.
Now the induced $2K_2$ in $G$ corresponds to a 4-cycle in $\overline{G}$ coloured with two colours, a contradiction with $c$ being an acyclic colouring.
 
In the opposite direction, let us have a connected matching of size $n$ in $G$. Colour the $n$ vertices in $A$ by $1,\ldots,n$. Let us colour the vertices of $B$ with respect to the connected matching so that each vertex of $B$ gets the colour of the vertex in $A$ it is matched to. Indeed, this is a colouring of $\overline{G}$ by $n$ colours. It remains to prove that it is acyclic. Any cycle in $G$ having more than five vertices has by the definition of our colouring at least three colours. Therefore, a possible bichromatic cycle in $\overline{G}$ must be of size $4$. The only possibility for such 4-cycle is that it is formed by two pairs of vertices, each one forming an edge of the connected matching in $G$. However, this implies that these two matching edges induce $2K_2$ in $G$, a contradiction with the connectedness of the original matching. This finishes the proof our claim.
\end{proof}

\noindent
We combine the above results with a result of Lyons~\cite{Ly11} to prove Theorem~\ref{t-acyclic}.

\medskip
\noindent
{\bf Theorem~\ref{t-acyclic} (restated).}
\emph{Let $H$ be a graph. For the class of $H$-free graphs it holds that:\\[-10pt]
\begin{enumerate}
\item [(i)] {\sc Acyclic Colouring} is polynomial-time solvable if $H\ssi P_4$ and \NP-complete if $H\not\ssi P_4$ and $H\neq 2P_2$;\\[-10pt]
\item [(ii)] for every $k\geq 3$, {\sc Acyclic $k$-Colouring} is polynomial-time solvable if $H$ is a linear forest and \NP-complete otherwise.
\end{enumerate}}

\begin{proof}
We first prove (ii). First suppose that $H$ contains an induced cycle $C_p$. Then we use Lemma~\ref{l-girth}.
Now assume $H$ has no cycle so $H$ is a forest. If $H$ has a vertex of degree at least~$3$, then $H$ has an induced $K_{1,3}$.
As every line graph of a multigraph is $K_{1,3}$-free, we can use Lemma~\ref{l-az}. Otherwise $H$ is a linear forest and we use Corollary~\ref{c-linearforest}. 

We now prove (i). Due to (ii), we may assume that $H$ is a linear forest.
If $H\ssi P_4$, then we use the result of Lyons~\cite{Ly11} that states that \textsc{Acyclic Colouring} is polynomial-time solvable for $P_4$-free graphs.
Now suppose $3P_1\ssi H$. By Lemma~\ref{l-3p1acyclic}, \textsc{Acyclic Colouring} is \NP-complete for co-bipartite graphs and thus for $3P_1$-free graphs. 
It remains to consider the case where $H=2P_2$, but this case was excluded from the statement of the theorem. 
\end{proof}

\section{Star Colouring}\label{s-star}

In this section we prove Theorem~\ref{t-star}. We first prove the following lemma.

\begin{figure}
\centering
\includegraphics[width=100mm]{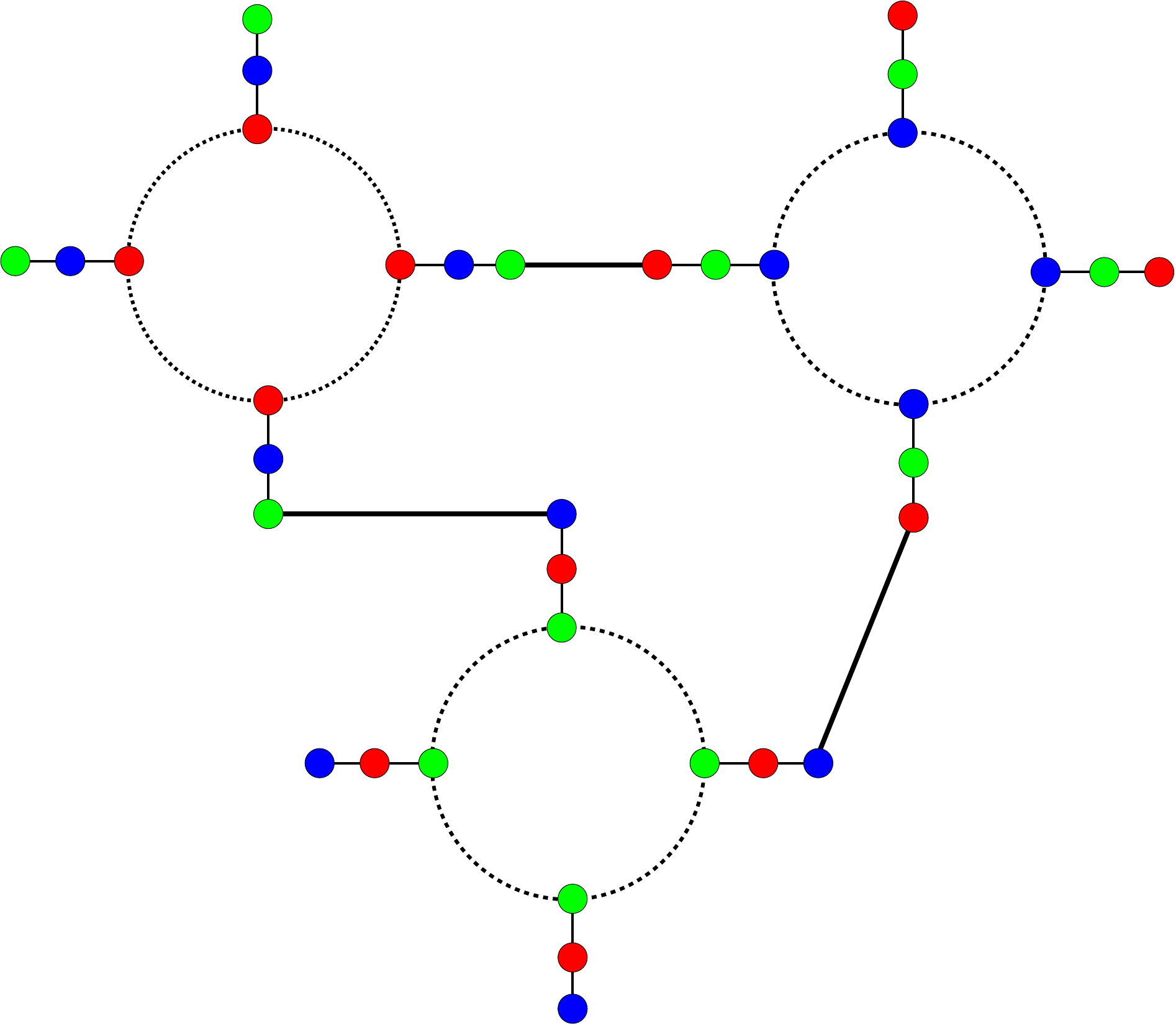}
\caption{A star $3$-colouring of the graph $G'$ obtained from a $3$-colouring of $G=K_3$. Only part of $G'$ is displayed.}
\label{fig:star3}
\end{figure}

\begin{lemma} \label{lem:star-col-high-girth}
For every $g \geq 3$, {\sc Star $3$-Colouring} is \NP-complete for planar graphs of girth at least $g$ and maximum degree $3$.
\end{lemma}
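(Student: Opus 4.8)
```latex
\paragraph*{Proof plan for Lemma~\ref{lem:star-col-high-girth}.}
The plan is to reduce from a known \NP-complete problem on planar graphs of bounded
degree, using the now-familiar strategy of replacing edges of an input graph by a
high-girth gadget that both (a) propagates a colour faithfully and (b) raises the
girth of the resulting graph above the target threshold $g$. A natural source
problem is \textsc{$3$-Colouring} (equivalently a suitable planar constraint
problem such as \textsc{Planar $3$-SAT} or \textsc{Planar (3,\,$\leq$4)-SAT}),
which is \NP-complete on planar graphs of maximum degree~$3$. The overall skeleton
is the same as in Lemma~\ref{l-girth} and the construction sketched in
Figure~\ref{fig:star3}: take an instance $G$, build $G'$ by local replacements,
argue star $3$-colourability of $G'$ is equivalent to $3$-colourability (or
satisfiability) of $G$, and finally check that $G'$ is planar, has girth at
least~$g$, and has maximum degree~$3$.

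First I would design the \emph{edge gadget}. Its job is to connect two
``terminal'' vertices $x_1,x_2$ so that, in every star $3$-colouring, the colours
forced on $x_1$ and $x_2$ encode the intended constraint (for a colouring
reduction, that $c(x_1)\neq c(x_2)$; for a SAT reduction, the analogous
consistency condition). The key extra subtlety relative to ordinary
$3$-colouring is that star colouring forbids bichromatic \emph{paths} on four
vertices $P_4$, not merely bichromatic cycles. So the gadget analysis must track,
for each pair of colours, not only that no two-coloured cycle appears but also
that no two-coloured $P_4$ is created when the gadget is attached. I would state
two gadget claims in the style of Lemma~\ref{l-az}: one asserting the existence of
a star $3$-colouring realising each admissible pair of terminal colours, and one
asserting that \emph{every} star $3$-colouring forces the desired relation on the
terminals. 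Proving these is done by a finite case analysis propagating colour
assignments along the gadget, exactly as in the proof of claim~(ii) of
Lemma~\ref{l-az}.

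To satisfy the girth requirement I would make each gadget itself a long subdivided
path/cycle structure (subdivision cannot decrease girth and one can pad each
edge-replacement with enough subdivision vertices so that any cycle passing
through a gadget has length at least~$g$); one must verify that subdivision does
not destroy the forcing property of the gadget, which is where the bichromatic-$P_4$
bookkeeping is most delicate. Preserving planarity is handled by attaching gadgets
respecting a fixed planar embedding of the bounded-degree instance, and the
maximum-degree-$3$ condition is maintained by splitting any vertex of degree
larger than three using the same gadget chains (as in the identification of
$z_1,\dots,z_d$ in Lemma~\ref{l-girth}), so that every vertex of $G'$ has
degree at most~$3$.

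The main obstacle I expect is the gadget's forcing argument under the stronger
star constraint: ruling out bichromatic $P_4$'s globally, after many gadgets are
glued together, is strictly harder than ruling out bichromatic cycles, because a
forbidden $P_4$ can straddle the boundary between a gadget and its neighbourhood.
The safeguard is to engineer the terminals $x_1,x_2$ to have low degree inside the
gadget and to ensure that the colours on their gadget-neighbours are pinned, so
that no externally completed $P_4$ can become bichromatic regardless of how the
rest of $G'$ is coloured. Once the gadget is shown to be ``$P_4$-safe'' at its
interface and to force the intended terminal relation, correctness of the
reduction and the structural properties (planar, girth $\geq g$, maximum
degree~$3$) follow by routine verification.
```
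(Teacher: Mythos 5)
Your proposal remains a plan precisely at the two points where the difficulty of Lemma~\ref{lem:star-col-high-girth} is concentrated, and one of its stated ingredients is wrong. First, the source problem: \textsc{$3$-Colouring} is \emph{not} \NP-complete for planar graphs of maximum degree~$3$ --- by Brooks' theorem every connected graph of maximum degree~$3$ other than $K_4$ is $3$-colourable, so that problem is polynomial-time solvable. The paper reduces from \textsc{$3$-Colouring} on planar graphs of maximum degree~$4$~\cite{GJS76}; your fallback to \textsc{Planar $3$-SAT} could in principle be made to work, but then the entire gadget design is left open.

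Second, and more seriously, the mechanism you propose for reaching girth~$g$ --- a ``long subdivided path/cycle structure'' obtained by ``padding each edge-replacement with enough subdivision vertices'' --- cannot deliver the forcing property you need, so the issue you flag as ``delicate bookkeeping'' is in fact fatal to this route. A long bare path between two terminals imposes essentially no relation between the terminal colours: already for proper $3$-colouring a single internal vertex kills the inequality constraint, and for star $3$-colouring a path with a few internal vertices admits star $3$-colourings realising \emph{every} pair of terminal colours, with no bichromatic $P_3$ touching either end. Subdivision therefore destroys the constraint outright rather than merely complicating its verification. The paper obtains rigidity from a different source: a \emph{vertex} gadget consisting of a cycle of length $12g$ in which every cycle vertex $d_i$ carries a pendant $e_i$, with four ports $f_1,\dots,f_4$ hung off pendants spaced $3g$ apart; the pendant at every cycle vertex is what makes star $3$-colourings of the cycle essentially periodic, forcing all four ports to receive the \emph{same} colour and leaving no bichromatic path between ports. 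Edges of $G$ are then realised as single edges between ports of adjacent vertex gadgets (so adjacency directly enforces inequality), which simultaneously handles planarity, the degree bound, and the girth bound. Your proposal identifies no comparable rigidity mechanism, so its central claim --- that a high-girth gadget can force the intended terminal relation under star $3$-colouring --- is unsupported.
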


\begin{proof}
We reduce from {\sc $3$-Colouring}, which is \NP-complete even for planar graphs with maximum degree 4~\cite{GJS76}.
Let $G$ be an instance of this restricted version of \textsc{$3$-Colouring}.
The vertex gadget $V$ contains

\begin{itemize}
 \item a cycle of length $12g$ with vertices $d_1,\ldots,d_{12g}$,
 \item $12g$ independent vertices $e_1,\ldots,e_{12g}$ such that $e_i$ is adjacent to $d_i$ for every $1\leq i\leq 12g$, and
 \item four independent vertices $f_1,f_2,f_3,f_4$ such that $f_i$ is adjacent to $e_{3ig}$ for every $1\leq i\leq 4$.
\end{itemize}

\noindent
We construct an instance $G'$ of \textsc{Star $3$-Colouring} from $G$ as follows.
We consider a planar embedding of $G$ and for every vertex $x$, we order the neighbours of $x$ in a clockwise way. 
Then we replace $x$ by a copy $V_x$ of $V$.
Now for every edge $mn$ of $G$, say $n$ is the $i^{th}$ neighbour of $m$ and $m$ is the $j^{th}$ neighbour of $n$,
we add the edge between the vertex $f_i$ of $V_m$ and the vertex $f_j$ of $V_n$, see Figure~\ref{fig:star3}.

It is not hard to check that in every star $3$-colouring of $V$, the four vertices $f_i$ get the same colour.
Moreover, there is no bichromatic path between any two vertices $f_i$.

Suppose that $G$ admits a $3$-colouring $c$ of with colours in $\{0,1,2\}$.
For every vertex $x$ in~$G$, we assign $c(x)$ to the vertices $f_i$ in $V_x$ and we assign $(c(x)+1)\pmod 3$ to the vertices~$e_{3ig}$.
Then we extend this pre-colouring into a star $3$-colouring of $V_x$.
This gives a star $3$-colouring of $G'$.
Given a star $3$-colouring of $G'$, we assign to every vertex $x$ in $G$ the colour of the vertices $f_i$ in $V_x$, which gives a $3$-colouring of $G$.

Finally, as $G$ is planar with maximum degree $4$, it holds that $G'$ is planar with maximum degree $3$.
Moreover, by construction, $G'$ has girth at least $g$.
\end{proof}

\noindent
Now we begin our development for Theorem~\ref{t-star}.

\begin{lemma}\label{l-evencycle}
Let $p\geq 4$ be a fixed integer. Then, for every $k\geq 3$, {\sc Star $k$-Colouring} is \NP-complete for $C_p$-free graphs.
\end{lemma}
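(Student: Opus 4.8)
The plan is to treat the base case $k=3$ directly using the high-girth lemma already proved, and then lift the result to every $k\geq 4$ by a padding reduction that introduces universal vertices.

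For $k=3$ I would invoke Lemma~\ref{lem:star-col-high-girth} with $g=p+1$. A graph of girth at least $p+1$ has no cycle of length at most $p$, so in particular it contains no induced $C_p$ and is therefore $C_p$-free. Hence {\sc Star $3$-Colouring} is already \NP-complete for $C_p$-free graphs; this settles $k=3$ and, more importantly, supplies the source problem for the reduction handling larger~$k$.

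For $k\geq 4$ I would reduce from {\sc Star $3$-Colouring} restricted to $C_p$-free graphs. Given such a graph $G$, construct $G'$ by adding $k-3$ new vertices $w_1,\dots,w_{k-3}$ and joining each $w_i$ to every other vertex of $G'$, so that $w_1,\dots,w_{k-3}$ become pairwise adjacent universal vertices; this is clearly polynomial. For the forward direction, a star $3$-colouring of $G$ with colours $\{1,2,3\}$ extends to a star $k$-colouring of $G'$ by giving $w_1,\dots,w_{k-3}$ the distinct colours $4,\dots,k$: the only bichromatic components meeting $\{w_1,\dots,w_{k-3}\}$ are a single edge $w_iw_j$, or a star centred at some $w_i$ spanning a (necessarily independent) colour class of $G$, while the bichromatic components lying inside $G$ are unchanged and hence star forests. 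Conversely, in any star $k$-colouring of $G'$ the clique $\{w_1,\dots,w_{k-3}\}$ receives $k-3$ distinct colours, and since each $w_i$ is universal the vertices of $G$ avoid all of these; they therefore use only the three remaining colours, and because the star condition is inherited by the induced subgraph on $V(G)$, the restriction is a star $3$-colouring of $G$.

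It remains to verify that $G'$ is $C_p$-free, which is the step demanding the most care and the main obstacle. Since $p\geq 4$, every vertex of an induced $C_p$ has exactly two neighbours among the other $p-1\geq 3$ vertices of that cycle, whereas a universal vertex is adjacent to all of them; hence no $w_i$ can lie on an induced $C_p$. Consequently every induced $C_p$ of $G'$ is contained in $G$, and as $G$ is $C_p$-free so is $G'$. The crux is thus the interplay of two requirements: the universal clique must pad the instance enough to force the palette available to $G$ down to exactly three colours, yet this padding must simultaneously preserve $C_p$-freeness and never create a bichromatic component that fails to be a star --- both of which are guaranteed precisely because the added vertices are universal and $p\geq 4$.
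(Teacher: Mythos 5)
Your proposal is correct and follows essentially the same route as the paper: the $k=3$ case is obtained from Lemma~\ref{lem:star-col-high-girth} by taking girth larger than $p$, and the case $k\geq 4$ is handled by adding a dominating clique of $k-3$ universal vertices. You have simply spelled out the verification details (extension of the colouring, forcing of the palette, and preservation of $C_p$-freeness) that the paper leaves implicit.
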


	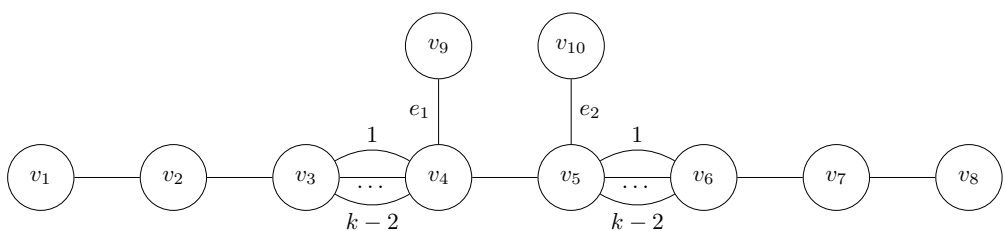
\begin{figure}[b]
		\resizebox{!}{3cm}{\begin{tikzpicture}[main_node/.style={circle,draw,minimum size=1cm,inner sep=3pt]}]

		\node[main_node](v1) at (0,0){$v_1$};
		\node[main_node](v2) at (2,0){$v_2$};
		\node[main_node](v3) at (4,0){$v_3$};
		\node[main_node](v4) at (6,0){$v_4$};
		\node[main_node](v5) at (8,0){$v_5$};
		\node[main_node](v6) at (10,0){$v_6$};
		\node[main_node](v7) at (12,0){$v_7$};
		\node[main_node](v8) at (14,0){$v_8$};
		\node[main_node](v9) at (6,2){$v_9$};
		\node[main_node](v10) at (8,2){$v_{10}$};

		\draw(v1)--(v2)--(v3);
		\draw (v3) to node[below] {$\dots$} (v4);
		\draw (v4)--(v5);
		\draw(v6)--(v7)--(v8);
		\draw (v3) to[out=30, in=150] node[above] {$1$} (v4); 
		\draw (v3) to[out=330, in=210] node[below] {$k-2$} (v4);
		\draw (v5) to[out=30, in=150] node[above] {$1$} (v6); 
		\draw (v5) to[out=330, in=210] node[below] {$k-2$} (v6);
		\draw(v5) to node[below] {$\dots$} (v6);
		\draw(v4) to node[left] {$e_1$}(v9);
		\draw(v5) to node[right] {$e_2$}(v10);
		\end{tikzpicture}}
		\caption{The gadget $F_k$ in the proof of Lemma~\ref{l-linestar}.}\label{staredge}
	\end{figure}

\begin{proof}
The case $k=3$ follows from Lemma~\ref{lem:star-col-high-girth}.
We obtain \NP-completeness for $k\geq 4$ by a reduction from \textsc{Star $3$-Colouring} for $C_p$-free graphs by adding a dominating clique of size $k-3$.
\end{proof}

\noindent
In Lemma~\ref{l-linestar} we extend the recent result of Lei et al.~\cite{LSS18} from $k=3$ to $k\geq 3$.

\begin{lemma}\label{l-linestar}
For every $k\geq 3$, {\sc Star $k$-Colouring} is \NP-complete for line graphs of multigraphs.
\end{lemma}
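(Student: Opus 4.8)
The plan is to follow the template of Lemma~\ref{l-az}: pass to the edge-colouring world, reduce from ordinary edge colouring, and transport the hardness through a rigid gadget $F_k$. First I would observe that a graph $G$ has a \emph{star $k$-edge colouring} (a proper edge colouring in which no path or cycle with four edges is $2$-coloured) if and only if its line graph has a star $k$-colouring. This is immediate: two colour classes of an edge colouring form a union of two matchings, whose components are paths and even cycles in $G$, and such a component induces a star in $L(G)$ exactly when it has at most three edges. Hence it suffices to show that \textsc{Star $k$-Edge Colouring} is \NP-complete for multigraphs, and I would reduce from \textsc{$k$-Edge Colouring}, which is \NP-complete for every $k\geq 3$ by Holyer~\cite{Ho81} ($k=3$) and Leven and Galil~\cite{LG83} ($k\geq 4$). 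The case $k=3$ of the lemma is the theorem of Lei et al.~\cite{LSS18}, so I would concentrate on $k\geq 4$.

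The core is the gadget $F_k$ of Figure~\ref{staredge}: a path $v_1v_2\cdots v_8$ in which $v_3v_4$ and $v_5v_6$ are bundles of $k-2$ parallel edges, together with the pendant edges $e_1=v_4v_9$ and $e_2=v_5v_{10}$. In the reduction each edge $uv$ of the input graph $G$ is replaced by a private copy of $F_k$ with $u$ identified with $v_1$ and $v$ identified with $v_8$, and the colour of $uv$ is read off as the common colour of the two terminal edges $v_1v_2$ and $v_7v_8$. The two facts I must establish about $F_k$ are: (i) for every colour $\alpha$ there is a star $k$-edge colouring of $F_k$ with $c(v_1v_2)=c(v_7v_8)=\alpha$; and (ii) every star $k$-edge colouring of $F_k$ satisfies $c(v_1v_2)=c(v_7v_8)$, so that the colour of $uv$ is well defined.

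For the rigidity (ii) I would propagate from the two degree-$k$ vertices $v_4,v_5$. Since $\deg v_4=k$, the bundle at $v_3v_4$ uses the set $S_{34}=\{1,\dots,k\}\setminus\{c(v_4v_5),c(e_1)\}$, and similarly $S_{56}=\{1,\dots,k\}\setminus\{c(v_4v_5),c(e_2)\}$; write $\alpha=c(v_4v_5)$, $\beta=c(e_1)$, $\gamma=c(e_2)$. At the degree-$(k-1)$ vertex $v_3$ the edge $v_2v_3$ avoids $S_{34}$, hence $c(v_2v_3)\in\{\alpha,\beta\}$; if $c(v_2v_3)=\alpha$ then picking any $c\in S_{34}\cap S_{56}$ (nonempty because $k\geq 4$) yields the $2$-coloured four-edge path $v_2v_3v_4v_5v_6$ coloured $\alpha,c,\alpha,c$, which is forbidden, so $c(v_2v_3)=\beta$. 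Then at $v_2$ the edge $v_1v_2$ avoids $\beta$, and if $c(v_1v_2)\in S_{34}$ the four-edge path $v_1v_2v_3v_4v_9$ coloured $c(v_1v_2),\beta,c(v_1v_2),\beta$ (using a bundle edge and the pendant $e_1$) would be $2$-coloured; hence $c(v_1v_2)=\alpha$. The symmetric argument on $v_5,v_6,v_7$ gives $c(v_7v_8)=\alpha$, so $c(v_1v_2)=c(v_7v_8)=\alpha=c(v_4v_5)$, proving (ii). Claim (i) is obtained by exhibiting one explicit colouring: taking $\beta=\gamma=\alpha+1\pmod k$ and distributing the remaining colours over the two bundles gives a valid star colouring, which is a short finite verification over the colour pairs.

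The step I expect to be the main obstacle is not the gadget itself but the gluing: I must ensure that combining gadgets at a shared original vertex creates no new $2$-coloured four-edge path. A path through a shared vertex $u=v_1$ enters two gadgets via their terminal edges $v_1v_2$, and the forced local colours ($c(v_1v_2)=\alpha$, $c(v_2v_3)=\beta$, and the bundle avoiding $\{\alpha,\beta\}$) confine any such path to at most three edges, except in the single case where two gadgets at $u$ have swapped terminal and secondary colours. To kill this case I would, in the forward direction, choose the internal colours by a uniform shift, setting $\beta=\gamma=c(uv)+1\pmod k$ for every gadget; for $k\geq 3$ this rules out the offending swap at every vertex (it would force $2\equiv 0\pmod k$) and it is compatible with the colouring of (i). The converse direction is immediate from (ii), since a star $k$-edge colouring of $G'$ restricts to a proper $k$-edge colouring of $G$. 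Finally $G'$ is a multigraph, so its line graph is a line graph of a multigraph, which completes the reduction.
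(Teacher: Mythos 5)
Your proposal is correct and follows essentially the same route as the paper: the same gadget $F_k$, the same reduction from \textsc{$k$-Edge Colouring} via the correspondence between star edge colourings of a multigraph and star colourings of its line graph, with the $k=3$ case delegated to Lei et al. Your unified propagation argument from $v_4,v_5$ merely repackages the paper's two-case analysis on whether $e_1,e_2$ are coloured alike, and your explicit check of cross-gadget two-coloured four-edge paths at shared vertices (handled by the uniform colour shift $\beta=\alpha+1 \bmod k$) supplies a verification the paper leaves implicit.
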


\begin{proof}
Recall that for an integer $k\geq 1$, a \emph{$k$-edge colouring} of a graph $G=(V,E)$ is a mapping $c:E\to \{1,\ldots,k\}$ such that $c(e)\neq c(f)$ whenever
the edges $e$ and $f$ share an end-vertex. Recall also that the notions of a colour class and bichromatic subgraph for colourings has its natural analogue for edge colourings.
A proper edge $k$-colouring $c$ is a \emph{star} $k$-edge colouring if the union of any two colour classes does not contain a path or cycle of on four edges.
For a fixed integer~$k\geq 1$, the \textsc{Star $k$-Edge Colouring} problem is to decide if a given graph has a star $k$-edge colouring.
Lei et al.~\cite{LSS18} proved that \textsc{Star $3$-Edge Colouring} is \NP-complete.
Dvo\v{r}{\'{a}}k et al.~\cite{DMS13} observed that a graph has a star $k$-edge colouring if and only if its line graph has a star $k$-colouring.
Hence, it suffices to follow the proof of Lei et al.\cite{LSS18} in order to generalize the case $k=3$ to $k\geq 3$.
As such, we give a reduction from \textsc{$k$-Edge Colouring} to \textsc{Star $k$-Edge Colouring} which makes use of the gadget $F_k$ in Figure~\ref{staredge}.
First we consider separately the case where the edges $e_1=v_4v_9$ and $e_2=v_5v_{10}$ are coloured alike and the case where they are coloured differently to show that in any star $k$-edge colouring of the gadget $F_k$ shown in Figure~\ref{staredge}, $v_1v_2$ and $v_7v_8$ are assigned the same colour.

Assume $c(e_1)=c(e_2)=1$. We may then assume that the edge $v_4v_5$ is assigned colour $2$ and the remaining $k-2$ colours are used for the multiple edges $v_3v_4$ and $v_5v_6$. The edge $v_2v_3$, and similarly $v_6v_7$, must then be assigned colour $1$ to avoid a bichromatic $P_5$ on the vertices $\{ v_2, v_3,v_4, v_5, v_6\}$ using any two of the multiple edges in a single colour. The edge $v_1v_2$, and similarly $v_7v_8$ must then be assigned colour $2$ to avoid a bichromatic $P_5$ on the vertices $\{v_1, v_2, v_3, v_4, v_9\}$.

Next assume $e_1$ and $e_2$ are coloured differently. Without loss of generality, let $c(e_1)=1$, $c(e_2)=2$ and $c(v_4v_5)=3$. The multiple edges $v_3v_4$ must then be assigned colours $2$ and $4 \dots k$ and $v_5v_6$ colour $1$ and colours $4 \dots k$. To avoid a bichromatic $P_5$ on the vertices $\{v_2,v_3,v_4,v_5,v_6\}$, $v_2v_3$ must be coloured $1$. Similarly, $v_6v_7$ must be assigned colour $2$. Finally, to avoid a bichromatic $P_5$ on the vertices $\{v_1,v_2,v_3,v_4,v_9\}$, $v_1v_2$ must be coloured $3$. By a similar argument, $v_7v_8$ must also be coloured $3$, hence $v_1v_2$ and $v_7v_8$ must be coloured alike.

We can then replace every edge $e$ in some instance $G$ of \textsc{$k$-Edge-Colouring} by a copy of $F_k$, identifying its endpoints with $v_1$ and $v_8$, to obtain an instance $G^{\prime}$ of \textsc{Star $k$-Edge-Colouring}. If $G$ is $k$-edge-colourable we can star $k$-edge-colour $G^{\prime}$ by setting $c^{\prime}(v_1v_2)=c^{\prime}(v_7v_8)=c(e)$. If $G^{\prime}$ is star $k$-edge-colourable, we obtain a $k$-edge-colouring of $G$ by setting $c(e)=c^{\prime}(v_1v_2)$.
\end{proof}

\noindent
We now combine the above results 
with results of Albertson et al.~\cite{ACKKR04}, Lyons~\cite{Ly11} and Shalu and Anthony~\cite{SA20} to prove Theorem~\ref{t-star}.

\medskip
\noindent
{\bf Theorem~\ref{t-star} (restated).}
\emph{Let $H$ be a graph. For the class of $H$-free graphs it holds that:\\[-10pt]
\begin{enumerate}
\item [(i)] {\sc Star Colouring} is polynomial-time solvable if $H\ssi P_4$ and \NP-complete if $H\not\ssi P_4$ and $H\neq 2P_2$;\\[-10pt]
\item [(ii)] for every $k\geq 3$, {\sc Star $k$-Colouring} is polynomial-time solvable if $H$ is a linear forest and \NP-complete otherwise.
\end{enumerate}}

\begin{proof}
We first prove (ii). First suppose that $H$ contains an induced odd cycle.
Then the class of bipartite graphs is contained in the class of $H$-free graphs.
Lemma 7.1 in Albertson et al.~\cite{ACKKR04} implies, together with the fact that for every $k\geq 3$, $k$-\textsc{Colouring} is \NP-complete,
that for every $k\geq 3$, \textsc{Star $k$-Colouring} is \NP-complete for bipartite graphs. If $H$ contains an induced even cycle, then we use Lemma~\ref{l-evencycle}.
Now assume $H$ has no cycle, so $H$ is a forest. If $H$ contains a vertex of degree at least~$3$, then $H$ contains an induced $K_{1,3}$.
As every line graph of a multigraph is $K_{1,3}$-free, we can use Lemma~\ref{l-linestar}. Otherwise $H$ is a linear forest, in which case we use Corollary~\ref{c-linearforest}. 

We now prove (i). Due to (ii), we may assume that $H$ is a linear forest.
If $H\ssi P_4$, then we use the result of Lyons~\cite{Ly11} that states that \textsc{Star Colouring} is polynomial-time solvable for $P_4$-free graphs.
Now suppose $3P_1\ssi H$. Shalu and Antony~\cite{SA20} who proved that \textsc{Star Colouring} is \NP-complete for co-bipartite graphs and thus for $3P_1$-free graphs. 
It remains to consider the case where $H=2P_2$, but this case was excluded from the statement of the theorem. 
\end{proof}

\section{Injective Colouring}\label{s-injective}

In this section we prove Theorem~\ref{t-injective}. 
We first show a hardness result for fixed $k$.\footnote{We note that Janczewski et al.~\cite{JKM09} proved that \textsc{$L(p,q)$-Labelling} is \NP-complete for planar bipartite graphs, but in their paper they assumed that $p>q$.}

\begin{lemma}\label{l-triangle}
For every $k\geq 4$, {\sc Injective $k$-Colouring} is \NP-complete for bipartite graphs.
\end{lemma}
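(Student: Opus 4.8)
The plan is to prove that \textsc{Injective $k$-Colouring} is \NP-complete for bipartite graphs for every fixed $k\geq 4$ by a reduction from a known \NP-complete problem. The most natural candidate is \textsc{$k$-Colouring} itself (or $3$-\textsc{Colouring} for small $k$), but the cleanest route is likely to reduce from \textsc{$(k-1)$-Colouring} (or a similarly sized chromatic problem) on general graphs. The central difficulty, and the step I expect to be the main obstacle, is the tension between two requirements: an injective colouring is a \emph{proper} colouring in which additionally all neighbours of every vertex receive distinct colours, yet I must land inside the class of \emph{bipartite} graphs, which are themselves only $2$-chromatic. So the hardness cannot come from the proper-colouring constraint directly; it must be forced entirely by the distance-$2$ (injective) constraint. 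Recall the equivalent formulation given in the excerpt: a colouring is injective precisely when each vertex sees all its neighbours in distinct colours, equivalently when the square-type adjacency it induces is properly coloured.

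First I would fix a gadget that, when attached to a vertex of degree $d$, forces the injective-colouring constraint to behave like a clique constraint of a controlled size among chosen ``port'' vertices. Concretely, I would introduce for each original vertex $v$ a bipartite gadget whose port vertices all share a common neighbour (so they are pairwise at distance $2$ and hence must receive pairwise distinct colours under any injective colouring), thereby simulating the adjacency/ordering structure of the graph we reduce from while keeping the whole construction bipartite. The subdivision idea used in Lemma~\ref{l-girth} is a useful template here: subdividing edges preserves bipartiteness and converts original adjacencies into distance-$2$ relationships, which is exactly the relation governed by injective colouring.

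Next I would argue the two directions of the equivalence. For the forward direction, given a valid colouring of the source instance, I would lift each colour to the corresponding ports and extend over each gadget, checking that no two neighbours of any vertex collide; this amounts to verifying the gadget admits a compatible injective extension for every admissible port-colouring, a finite check since $k$ is fixed. For the reverse direction, I would show every injective $k$-colouring of the constructed bipartite graph restricts, on the ports, to a colouring that decodes to a valid solution of the source instance, using the forced-distinctness of ports sharing a common neighbour. The parameter $k\geq 4$ should enter by giving enough slack to colour the auxiliary (non-port) vertices of each gadget while still forcing exactly the intended constraints on the ports; the threshold $k=4$ (rather than $3$) matches the remark in the excerpt that \textsc{Injective $3$-Colouring} is trivial, so the construction must genuinely use a fourth colour.

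The main obstacle, as noted, is engineering the gadget so that it simultaneously (a) is bipartite, (b) forces precisely the desired distance-$2$ constraints among ports and nothing spurious, and (c) is always extendable to a full injective $k$-colouring whenever the port pre-colouring is admissible. Getting (c) to hold for \emph{every} admissible port-colouring, rather than just some, is the delicate point, since a single bad local configuration would break the reduction; I would resolve this by keeping gadget vertices of low degree and verifying extendability by an explicit finite case analysis over the (constantly many) port-colour patterns.
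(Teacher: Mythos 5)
Your plan correctly isolates the central difficulty (the hardness must come entirely from the distance-$2$ constraints, since the host class is bipartite), but the route you choose does not resolve it, and in the form sketched it breaks down. You propose to reduce from ordinary \textsc{$(k-1)$-Colouring} or \textsc{$k$-Colouring} and to simulate adjacency by subdivision-style gadgets whose port vertices share a common neighbour. The trouble is that such a construction forces \emph{more} than the proper-colouring constraint of the source instance: around each original vertex $u$, all the gadget vertices attached to $u$ become pairwise distance-$2$ constrained (they are common neighbours of $u$), so the forward direction requires extending a proper $k$-colouring of $G$ to, in effect, a proper edge colouring of $G$ that also avoids the endpoint colours. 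This coupling involves all $\deg(u)$ attachment points simultaneously, so it is not a ``finite case analysis over constantly many port-colour patterns'': either you bound the maximum degree of the source instance (at which point $k$-colourability becomes easy by Brooks-type arguments for the relevant degree range), or the forward direction fails for some valid colourings of $G$. Symmetrically, the reverse direction would certify a property strictly stronger than $k$-colourability. So the equivalence ``$G$ is $k$-colourable if and only if $G'$ is injectively $k$-colourable'' is not established, and this is a genuine gap rather than a routine verification you deferred.

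The paper sidesteps the issue by choosing a different source problem: it reduces from \textsc{Injective $k$-Colouring} itself on general graphs, which is already known to be \NP-complete for every $k\geq 4$ (e.g.\ for line graphs, by Mahdian). Each edge $uv$ is replaced by two new vertices $u'_v$ (adjacent to $u$) and $v'_u$ (adjacent to $v$) together with an independent set $I_{uv}$ of $k-2$ vertices adjacent to both; a short counting argument shows that any injective $k$-colouring must set $c'(u'_v)=c'(v)$ and $c'(v'_u)=c'(u)$. Consequently the distance-$2$ constraints around each original vertex of the bipartite graph coincide \emph{exactly} with the injectivity constraints of the original instance --- no spurious constraints are created, so the extendability problem you flagged never arises. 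The essential missing idea in your proposal is this choice of source problem (or, failing that, a concrete gadget together with a proof that every admissible port-colouring extends, which the degree argument above indicates is unattainable when reducing from a plain chromatic problem).
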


\begin{proof}
We reduce from \textsc{Injective $k$-Colouring}; recall that this problem is \NP-complete for every $k\geq 4$.
Let $G=(V,E)$ be a graph. We construct a graph $G^{\prime}$ as follows. For each edge $uv$ of $G$, we remove the edge $uv$ and add two vertices $u^{\prime}_v$, which we make adjacent to $u$, and $v^{\prime}_u$, which we make adjacent to $v$. 
Next, we place an independent set $I_{uv}$ of $k-2$ vertices adjacent to both $u^{\prime}_v$ and $v^{\prime}_u$. 
Note that $G^\prime$ is bipartite: we can let one partition class consist of all vertices of $V(G)$ and the vertices of the $I_{uv}$-sets and the other one consist of all the remaining vertices (that is, all the ``prime'' vertices we added). 
It remains to show that $G^{\prime}$ has an injective $k$-colouring if and only if $G$ has an injective $k$-colouring.

\begin{figure}[h]
\centering
\includegraphics[width=0.5\textwidth]{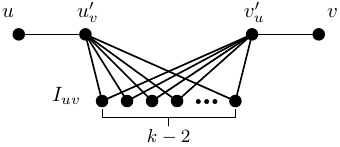}
\caption{The edge gadget used in the proof of Lemma~\ref{l-triangle}.}
\label{fig:l-triangle}
\end{figure}

First assume that $G$ has an injective $k$-colouring $c$. Colour the vertices of $G^{\prime}$ corresponding to vertices of $G$ as they are coloured by $c$. We can extend this to an injective $k$-colouring $c^{\prime}$ of $G^{\prime}$ by considering the gadget corresponding to each edge $uv$ of $G$. Set $c^{\prime}(u^{\prime}_v)=c^{\prime}(v)$ and $c^{\prime}(v^{\prime}_u)=c^{\prime}(u)$. We can now assign the remaining $k-2$ colours to the vertices of the independent sets. Clearly $c^{\prime}$ creates no bichromatic $P_3$ involving vertices in at most one edge gadget. Assume there exists a bichromatic $P_3$ involving vertices in more than one edge gadget, then this path must consist of a vertex $u$ of $G$ together with two gadget vertices $u^{\prime}_v$ and $u^{\prime}_w$ which are coloured alike. This is a contradiction since it implies the existence of a bichromatic path $v,u,w$ in $G$.

Now assume that $G^{\prime}$ has an injective $k$-colouring $c^{\prime}$. Let $c$ be the restriction of $c^{\prime}$ to those vertices of $G^{\prime}$ which correspond to vertices of $G$. To see that $c$ is an injective colouring of $G$, note that we must have $c^{\prime}(u^{\prime}_v)=c^{\prime}(v)$ and $c^{\prime}(v^{\prime}_u)=c^{\prime}(u)$ for any edge $uv$. Therefore, if $c$ induces a bichromatic $P_3$ on $u,v,w$, then $c^{\prime}$ induces a bichromatic $P_3$ on $v^{\prime}_u, v, v^{\prime}_w$. We conclude that $c$ is injective.
\end{proof}

\noindent
We now turn to the case where $k$ is part of the input and first prove a number of positive results. The \emph{complement} of a graph $G$ is the graph $\overline{G}$ with vertex set $V(G)$ and an edge between two vertices $u$ and $v$ if and only if $uv\notin E(G)$. An injective colouring $c$ of a graph $G$ is \emph{optimal} if $G$ has no injective colouring using fewer colours than~$c$.
An injective colouring $c$ is \emph{$\ell$-injective} if every colour class of~$c$ has size at most~$\ell$. An $\ell$-injective colouring $c$ of a graph $G$ is \emph{$\ell$-optimal} if $G$ has no $\ell$-injective colouring that uses fewer colours than $c$. We start with a useful lemma for the case where $\ell=2$ that we will also use in our proofs.

\begin{lemma}\label{l-easy}
An optimal $2$-injective colouring of a graph $G$ can be found in polynomial time.
\end{lemma}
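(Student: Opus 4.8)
The plan is to reduce the problem to computing a maximum matching in an auxiliary graph, which is solvable in polynomial time. First I would record the relevant structure of injective colourings. Since an injective colouring is proper and forbids any bichromatic $P_3$, two vertices receive the same colour only if they are neither adjacent nor share a common neighbour; equivalently, every colour class is a set of vertices that are pairwise at distance at least~$3$ in~$G$. For a \emph{$2$-injective} colouring the extra restriction is that every colour class has size at most~$2$, so each class is either a single vertex or a pair $\{u,v\}$ with $\mathrm{dist}_G(u,v)\geq 3$.

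Next I would translate optimality into a combinatorial maximisation. If a $2$-injective colouring of the $n$-vertex graph $G$ uses exactly $m$ colour classes of size~$2$, then it uses $n-m$ colours in total, because the $2m$ vertices lying in pairs together with the $n-2m$ singletons give $m+(n-2m)=n-m$ classes. Hence a $2$-injective colouring is $2$-optimal precisely when it maximises the number $m$ of size-$2$ classes. These size-$2$ classes are vertex-disjoint pairs, each at mutual distance at least~$3$, so they are exactly the edges of a matching in the auxiliary graph $H$ on vertex set $V(G)$ in which $uv\in E(H)$ if and only if $u$ and $v$ are distinct, non-adjacent, and have no common neighbour in~$G$. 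Conversely, any matching of $H$ yields a valid $2$-injective colouring by giving each matched pair its own colour and each unmatched vertex its own colour: this colouring is proper (adjacent vertices are at distance~$1$ and hence never matched) and has no bichromatic $P_3$ (a path $x$--$y$--$z$ with $c(x)=c(z)$ would force $x$ and $z$ to be at distance at most~$2$, whereas matched pairs are at distance at least~$3$), and these two facts together mean it is injective. This sets up a one-to-one correspondence between $2$-injective colourings using $n-m$ colours and matchings of $H$ of size~$m$.

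Finally I would invoke a polynomial-time matching algorithm. The graph $H$ can be built in polynomial time by computing, for every pair of vertices, whether their distance in $G$ is at least~$3$. A maximum matching $M$ of $H$ can then be found in polynomial time; since $H$ need not be bipartite, the appropriate tool is Edmonds' blossom algorithm for general graphs. Returning the colouring in which each edge of $M$ forms a colour class and each unmatched vertex forms a singleton class gives an optimal $2$-injective colouring, using $n-|M|$ colours. The only point that requires care—and the main obstacle—is verifying the correspondence of the previous paragraph in both directions: that every matching produces a genuinely injective colouring, and that no $2$-injective colouring can use fewer than $n-|M|$ colours. Both follow directly from the distance-at-least-$3$ characterisation of colour classes, after which the statement reduces entirely to general maximum matching.
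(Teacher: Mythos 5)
Your proposal is correct and follows essentially the same route as the paper: the paper phrases the auxiliary graph as the complement $\overline{G}$ with all non-dominating edges removed, which is exactly your graph $H$ of pairs at distance at least~$3$ (an edge $uv$ of $\overline{G}$ is dominating precisely when $u$ and $v$ have no common neighbour in $G$), and both arguments then reduce optimality to a maximum matching computation yielding $n-\mu^*$ colours. No gaps.
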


\begin{proof}
Let $c$ be a $2$-injective colouring of $G$. Then each colour class of size~$2$ in $G$ corresponds to a \emph{dominating} edge of $\overline{G}$ (an edge $uv$ of a graph is dominating if every other vertex in the graph is adjacent to at least one of $u,v$).
Hence, the end-vertices of every non-dominating edge in $\overline{G}$ have different colours in $G$. Algorithmically, this means we may delete every non-dominating edge of $\overline{G}$ from~$\overline{G}$; note that we do not delete the end-vertices of such an edge.

Let $\mu^*$ be the size of a maximum matching in the graph obtained from $\overline{G}$ after deleting all non-dominating edges of $\overline{G}$. The edges in such a matching will form exactly the colour classes of size $2$ of an optimal $2$-injective colouring of $G$. Hence,
the injective chromatic number of $G$ is equal to $\mu^*+(|V(G)|-2\mu^*)$. It remains to observe that we can find a maximum matching in a graph in polynomial time by using a standard algorithm.
\end{proof}

\noindent
We can now prove our first positive result.

\begin{lemma}\label{l-p1p4}
{\sc Injective Colouring} is polynomial-time solvable for $(P_1+P_4$)-free graphs.
\end{lemma}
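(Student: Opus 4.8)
The plan is to reduce the problem to the $2$-injective case handled by Lemma~\ref{l-easy}, by showing that for $(P_1+P_4)$-free graphs every colour class of an injective colouring has size at most~$2$. First I would dispose of the easy case: if $G$ contains no induced $P_4$, then $G$ is a disjoint union of connected cographs, and since a connected cograph has diameter at most~$2$, any two of its vertices are adjacent or share a common neighbour. Hence every injective colouring must give the vertices of a component pairwise distinct colours, so the injective chromatic number equals the order of a largest component, and the instance is decided trivially.

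So assume $G$ contains an induced $P_4$, say $P=p_1p_2p_3p_4$. The first key step is the structural observation that $V(P)$ is a \emph{dominating set}: any vertex $w$ that is neither in $V(P)$ nor adjacent to a vertex of $P$ would, together with $P$, induce a $P_1+P_4$. In particular this forces $G$ to be connected and gives that every vertex lies in $N[p_i]$ for some~$i$.

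The heart of the argument is to prove that every colour class $S$ of an injective colouring satisfies $|S|\le 2$. Suppose not and pick distinct $a,b,c\in S$. Being a colour class of an injective colouring, $S$ is independent and its vertices have pairwise disjoint open neighbourhoods, so each pair among $a,b,c$ is at distance at least~$3$. I would then split into two cases. If some pair, say $a,b$, is at distance exactly~$3$, then a shortest $a$--$b$ path is an induced $P_4$, say $a,u,v,b$; since $u\in N(a)$ and $v\in N(b)$ while $N(c)$ is disjoint from both $N(a)$ and $N(b)$, the vertex $c$ is non-adjacent to all of $a,u,v,b$, producing an induced $P_1+P_4$, a contradiction. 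Otherwise all three pairwise distances are at least~$4$; using that $V(P)$ dominates, each of $a,b,c$ lies in some $N[p_i]$, these three indices must be distinct (two vertices in a common $N[p_i]$ are at distance at most~$2$), and since $\alpha(P_4)=2$ two of the three chosen path-vertices are adjacent, forcing two of $a,b,c$ to be at distance at most~$3<4$, again a contradiction.

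Having shown that every injective colouring of $G$ is in fact a $2$-injective colouring, the injective chromatic number of $G$ equals its optimal $2$-injective value, which Lemma~\ref{l-easy} computes in polynomial time; comparing this with the input~$k$ settles the problem. The step I expect to be the main obstacle is the case analysis bounding the colour-class size, and in particular separating the distance-exactly-$3$ case from the distance-$\ge 4$ case: it is precisely the domination by $V(P)$ together with $\alpha(P_4)=2$ that rules out three mutually far vertices in the latter case.
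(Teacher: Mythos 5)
Your proof is correct and follows the same overall skeleton as the paper's: dispose of the $P_4$-free case via the diameter-$2$ observation, show that every colour class of an injective colouring has size at most $2$, and then invoke Lemma~\ref{l-easy}. The one place where you genuinely diverge is in the proof of the central claim bounding the colour-class size. The paper fixes an induced $P_4$, call it $P$, and splits according to how many vertices of a hypothetical size-$3$ colour class lie on $P$ (at least two, exactly one, or none), deriving in each case either a vertex with two like-coloured neighbours or an induced $P_1+P_4$. You instead argue metrically: like-coloured vertices are independent with pairwise disjoint open neighbourhoods, hence pairwise at distance at least $3$; a pair at distance exactly $3$ yields an induced $P_4$ to which the third vertex is anticomplete, while three vertices pairwise at distance at least $4$ are impossible because $V(P)$ dominates $G$ and $\alpha(P_4)=2$. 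All the individual steps check out (in particular, $c$ is indeed non-adjacent to the internal vertices $u,v$ of the shortest path because $N(a)\cap N(c)=N(b)\cap N(c)=\emptyset$, and the three dominating indices in the far case must be distinct). Your distance-based argument is arguably cleaner and less case-heavy than the paper's position-on-$P$ analysis, and your handling of the $P_4$-free base case (maximum component order rather than total vertex count) is slightly more careful about disconnected inputs; what it does not buy is any additional generality, since both arguments are tailored to $P_1+P_4$.
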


\begin{proof}
Let $G$ be a $(P_1+P_4)$-free graph.
Since connected $P_4$-free graphs have diameter at most $2$, no two vertices can be coloured alike in an injective colouring.
Hence, the injective chromatic number of a $P_4$-free graph is equal to the number of its vertices.
Consequently, \textsc{Injective Colouring} is polynomial-time solvable for $P_4$-free graphs.
From now on, we assume that $G$ is not $P_4$-free.

We first show that any colour class in any injective colouring of $G$ has size at most $2$.
For contradiction, assume that $c$ is an injective colouring of $G$ such that there exists some colour, say colour~$1$, that has a colour class of size at least~$3$.
Let $P=x_1x_2x_3x_4$ be some induced $P_4$ of $G$.

We first consider the case where colour~$1$ appears at least twice on $P$. As no vertex has two neighbours coloured with the same colour, the only way in which this can happen is when $c(x_1)=c(x_4)=1$. By our assumption, $G-P$ contains a vertex~$u$ with $c(u)=1$. As $G$ is $(P_1+P_4)$-free, $u$ has a neighbour on $P$. As every colour class is an independent set, this means that $u$ must be adjacent to at least one of $x_2$ and $x_3$. Consequently, either $x_2$ or $x_3$ has two neighbours with colour~$1$, a contradiction.

Now we consider the case where colour~$1$ appears exactly once on $P$, say $c(x_h)=1$ for some $h\in \{1,2,3,4\}$. Then, by our assumption, $G-P$ contains two vertices $u_1$ and $u_2$ with colour~$1$. As $G$ is $(P_1+P_4)$-free, both $u_1$ and $u_2$ must be adjacent to at least one vertex of $P$, say $u_1$ is adjacent to $x_i$ and $u_2$ is adjacent to $x_j$. Then $x_i\neq x_j$, as otherwise $G$ has a vertex with two neighbours coloured~$1$.
As every colour class is an independent set, we have that $x_h\notin \{x_i,x_j\}$, and hence, $x_h$, $x_i$, $x_j$ are distinct vertices. Moreover, $x_h$ is not a neighbour of $x_i$ or $x_j$, as otherwise $x_i$ or $x_j$ has two neighbours coloured~$1$. Hence, we may assume without loss of generality that $h=1$, $i=3$ and $j=4$. As every colour class is an independent set, $u_1$ and $u_2$ are non-adjacent. However, now $\{x_1,u_1,x_3,x_4,u_2\}$ induces a $P_1+P_4$, a contradiction.

Finally, we consider the case where colour~$1$ does not appear on $P$. Let $u_1$, $u_2$, $u_3$ be three vertices of $G-P$ coloured~$1$. As before, $\{u_1,u_2,u_3\}$ is an independent set and each $u_i$ has a different neighbour on $P$. We first consider the case where $x_1$ or $x_4$, say $x_4$ is not adjacent to any $u_i$. Then we may assume without loss of generality that $u_1x_1$ and $u_2x_2$ are edges. However, now $\{x_4,u_1,x_1,x_2,u_2\}$ induces a $P_1+P_4$, which is not possible. Hence, we may assume without loss of generality that $u_1x_1$, $u_2x_2$ and $u_4x_4$ are edges of $G$. Again we find that $\{x_4,u_1,x_1,x_2,u_2\}$ induces a $P_1+P_4$, a contradiction.
 
From the above, we find that each colour class in an injective colouring of $G$ has size at most $2$. This means we can use Lemma~\ref{l-easy}. 
\end{proof}

\noindent
We use the next lemma in the proofs of the results for $H=2P_1+P_3$ and $H=3P_1+P_2$.

\begin{lemma}\label{l-4p1}
{\sc Injective Colouring} is polynomial-time solvable for $4P_1$-free graphs.
\end{lemma}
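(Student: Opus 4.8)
The plan is to show that for $4P_1$-free graphs, every injective colouring uses colour classes of bounded size, and then reduce to a polynomially-solvable matching-type problem as was done in Lemma~\ref{l-easy}. The key structural observation is that a $4P_1$-free graph has independence number at most $3$; since each colour class in an injective colouring is an independent set, every colour class has size at most $3$. This is weaker than the size-$2$ bound of Lemma~\ref{l-p1p4}, so Lemma~\ref{l-easy} does not apply directly, and the main work will be to handle colour classes of size exactly~$3$.

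First I would record that a colour class of size~$3$ in an injective colouring of $G$ corresponds to an independent set $\{x,y,z\}$ in $G$ such that no vertex of $G$ has two neighbours among $x,y,z$; equivalently, in the complement $\overline{G}$ the set $\{x,y,z\}$ induces a triangle all of whose vertices pairwise ``dominate'' in the appropriate sense. I would then characterise, for each of the sizes $\ell\in\{1,2,3\}$, which independent sets are admissible as colour classes: a set $S$ with $|S|\le 3$ is an admissible class exactly when $S$ is independent in $G$ and no vertex outside $S$ sees two members of $S$ (the injectivity constraint). The task is to partition $V(G)$ into as few admissible classes as possible.

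Next I would set up the optimisation as a problem over a bounded-size ground structure. Because $\alpha(G)\le 3$, the number of maximal admissible classes of size $3$ can be enumerated, but in general there may be polynomially many; the natural route is to phrase the search for an optimal injective colouring as finding a minimum partition of $V(G)$ into admissible independent sets of size at most~$3$. I would try to reduce this to a suitable matching/flow formulation: guess or compute the collection of size-$3$ classes greedily or via a reduction to a polynomial-time solvable problem such as a bounded-size set-packing that is tractable here because classes have size at most~$3$ and the constraints are local. A clean approach is to first decide how many size-$3$ classes to use, then use a maximum-matching computation (as in Lemma~\ref{l-easy}) on the remaining vertices to pack as many size-$2$ classes as possible, leaving singletons for the rest, and to argue that an optimal colouring can always be assumed to maximise large classes greedily.

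The hard part will be justifying that the greedy/matching decomposition is optimal, that is, that one may restrict attention to colourings that use a maximum number of size-$3$ classes and then a maximum matching of admissible size-$2$ classes, without losing optimality. The obstacle is that admissible size-$3$ classes interact: choosing one such class may destroy the admissibility of size-$2$ classes on the remaining vertices, so a naive greedy choice need not extend to an optimal solution. I expect to resolve this by exploiting the very small independence number (at most~$3$): the number of size-$3$ classes in \emph{any} injective colouring is tightly constrained, so I can afford to enumerate over the few possible size-$3$ classes (or over the bounded number of them that can coexist), and for each choice reduce the residual instance to the size-$\le 2$ case handled by Lemma~\ref{l-easy}, taking the best outcome over all choices. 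Since each step is polynomial and the branching is controlled by the constant independence bound, the overall algorithm runs in polynomial time.
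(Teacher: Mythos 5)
Your structural starting point is right (independence number at most $3$, hence colour classes of size at most $3$, hence the only difficulty is the size-$3$ classes), and your instinct to fall back on Lemma~\ref{l-easy} for the residual instance matches the paper. But the step you flag as ``the hard part'' is exactly where your proposal breaks: you claim that the number of size-$3$ classes that can coexist is bounded, so that you can enumerate over them. That is false. In a $4P_1$-free graph an injective colouring can have $\Theta(n)$ pairwise disjoint size-$3$ colour classes, and the pool of candidate size-$3$ classes has size $\Theta(n^3)$; choosing a maximum coexisting family from such a pool is a priori a $3$-dimensional-matching-type packing problem, so ``controlled branching over a constant number of choices'' is not available. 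The small independence number bounds the \emph{size} of each class, not the \emph{number} of size-$3$ classes.

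The paper escapes this with a structural analysis you are missing. Fix one size-$3$ class $\{u_1,u_2,u_3\}$ (there are only $O(n^3)$ candidates, so this guess is cheap). Since $G$ is $4P_1$-free and the colouring is injective, $V(G)\setminus\{u_1,u_2,u_3\}$ partitions into three cliques $T_1,T_2,T_3$ with $T_i$ attached to $u_i$. Every further size-$3$ class must consist of one vertex from each $T_i$ that has no neighbour in the other two cliques (a ``$0$-clique-adjacent'' vertex), vertices with neighbours in both other cliques are forced to be singletons, and --- this is the key lemma --- a swapping argument shows that \emph{which} $0$-clique-adjacent vertices are grouped into triples is irrelevant: one may assume the size-$3$ classes are formed from an arbitrary fixed prefix of the $0$-clique-adjacent vertices of each clique. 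Consequently only the \emph{number} $p$ of additional size-$3$ classes needs to be tried ($p=q_1,q_1-1,\ldots,0$, where $q_1$ is the minimum count of $0$-clique-adjacent vertices over the three cliques), and for each $p$ the residue is handled by Lemma~\ref{l-easy}. Without an interchangeability argument of this kind, your reduction to the size-$\le 2$ case does not go through in polynomial time.
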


\begin{proof}
Let $G=(V,E)$ be a $4P_1$-free graph on $n$ vertices. We first analyze the structure of injective colourings of $G$.
Let $c$ be an 
optimal injective colouring of~$G$. As $G$ is $4P_1$-free, every colour class of $c$ has size at most~$3$. From all optimal injective colourings, we choose $c$ such that the number of size-$3$ colour classes is as small as possible.
We say that $c$ is \emph{class-$3$-optimal}.

Suppose $c$ contains a colour class of size~$3$, say colour~$1$ appears on three distinct vertices $u_1$, $u_2$ and $u_3$ of $G$.
As $G$ is $4P_1$-free, $\{u_1,u_2,u_3\}$ dominates $G$. As $c$ is injective, this means that every vertex in $G-\{u_1,u_2,u_3\}$ is adjacent to exactly one vertex of $\{u_1,u_2,u_3\}$. Hence, we can partition
$V\setminus \{u_1,u_2,u_3\}$ into three sets $T_1$, $T_2$ and $T_3$, such that for $i\in \{1,2,3\}$, every vertex of $T_i$ is adjacent to $u_i$ and not to any other vertex of $\{u_1,u_2,u_3\}$. 
If two vertices $t,t'$ in the same $T_i$, say $T_1$, are non-adjacent, then $\{t,t',u_2,u_3\}$ induces a $4P_1$, a contradiction. Hence, we partitioned $V$ into three cliques $T_i\cup \{u_i\}$.
We call the cliques $T_1$, $T_2$, $T_3$, the \emph{$T$-cliques} of the triple $\{u_1,u_2,u_3\}$. 

Let $t\in T_i$ for some $i\in \{1,2,3\}$. For $i\in \{0,1,2\}$ we say that $t$ is \emph{$i$-clique-adjacent} if $t$ has a neighbour in zero, one or two cliques of $\{T_1,T_2,T_3\}\setminus T_i$, respectively. 
By the definition of an injective colouring and the fact that every $T_i$ is a clique, a $1$-clique-adjacent vertex of $T_1\cup T_2\cup T_3$ belongs to a colour class of size at most~$2$, and a $2$-clique-adjacent vertex of $T_1\cup T_2\cup T_3$ belongs to a colour class of size~$1$. Hence, all the vertices that belong to a colour class of size~$3$ are $0$-clique-adjacent. The partition of $V(G)$ is illustrated in Figure~\ref{f-4p1}.

\begin{figure}[h]
\centering
\includegraphics[width=0.95\textwidth]{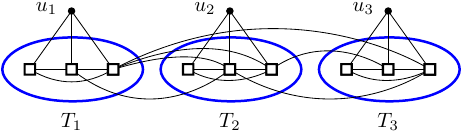}
\caption{The partition of $V(G)$ from Lemma~\ref{l-4p1}. The squares inside each $T_i$, $i \in \{1,2,3\}$, represent the sets of $0$-clique-adjacent, $1$-clique-adjacent and $2$-clique-adjacent vertices in $T_i$, respectively.}\label{f-4p1}
\end{figure}

We now use the fact that $c$ is class-$3$-optimal.
Let $t\in V\setminus \{u_1,u_2,u_3\}$, say $t\in T_1$, be $i$-clique-adjacent for $i=0$ or $i=1$. Then we may assume without loss of generality that $t$ has no neighbours in $T_2$.
If $t$ belongs to a colour class of size~$1$, then we can set $c(u_2):=c(t)$ to obtain an optimal injective colouring with fewer size-$3$ colour classes, contradicting our choice of $c$. 

We now consider the $0$-clique-adjacent vertices again. Recall that these are the only vertices, other than $u_1$, $u_2$ and $u_3$, that may belong to a colour class of size~$3$. As every $T_i$ is a clique, every colour class of size~$3$ (other than $\{u_1,u_2,u_3\}$) has exactly one vertex of each $T_i$. Let $\{w_1,w_2,w_3\}$ be another colour class of size~$3$ with $w_i\in T_i$ for every $i\in \{1,2,3\}$. Let $x\in T_1\setminus \{w_1\}$ be another $0$-clique-adjacent vertex. 
Then swapping the colours of $w_1$ and $x$ yields another class-$3$-optimal injective colouring of~$G$. Hence, we derived the following claim, which summarizes the discussion above and where statement (iv) follows from (i)--(iii).

\medskip
\noindent
{\bf Claim.} {\it Let $c$ be a class-$3$-optimal injective colouring of $G$ with $c(u_1)=c(u_2)=c(u_3)$ for three distinct vertices $u_1,u_2,u_3$ and with $p\geq 0$ other colour classes of size~$3$. Then the following four statements hold:
\begin{itemize}
\item [(i)] All $0$-clique-adjacent and $1$-clique-adjacent vertices belong to a colour class of size at least~$2$.\\[-5pt]
\item [(ii)] Let $S=\{y_1,\ldots,y_s\}$ be the set of $2$-clique-adjacent vertices. Then $\{y_1\},\ldots, \{y_s\}$ are exactly the size-$1$ colour classes.\\[-5pt]
\item [(iii)] For $i\in \{1,2,3\}$, let $x_1^i,\ldots,x_{q_i}^i$ be the $0$-clique-adjacent vertices of $T_i$ and assume without loss of generality that $q_1\leq q_2\leq q_3$. Then $p\leq q_1$ and if $p\geq 1$, we may assume without loss of generality that the size-$3$ classes, other than $\{u_1,u_2,u_3\}$, are $\{x_1^1,x_1^2,x_1^3\},\ldots, \{x_p^1,x_p^2,x_p^3\}$.\\ 
\item [(iv)] The number of colours used by $c$, or equivalently, the number of colour classes of $c$ is equal to $1+s+p+\frac{1}{2}(n-s-3(p+1))=\frac{1}{2}n+\frac{1}{2}s-\frac{1}{2}p-\frac{1}{2}$.
\end{itemize}}

\medskip
\noindent
We are now ready to present our algorithm. We first find, in polynomial time, an optimal $2$-injective colouring of $G$ by Lemma~\ref{l-easy}. We remember the number of colours used. Recall that the colour classes of every injective colouring of $G$ have size at most~$3$. So, it remains to compute an optimal injective colouring for which at least one colour class has size~$3$.

We consider each triple $u_1,u_2,u_3$ of vertices of $G$ and check if $\{u_1,u_2,u_3\}$ can be a colour class. That is, we check if $\{u_1,u_2,u_3\}$ is an independent set and has corresponding $T$-cliques $T_1$, $T_2$, $T_3$. This takes polynomial time. If not, then we discard $\{u_1,u_2,u_3\}$. Otherwise we continue as follows. Let $S=\{y_1,\ldots,y_s\}$ be the set of $2$-clique adjacent vertices in $T_1\cup T_2\cup T_3$. Exactly the vertices of $S$ will form the size-$1$ colour classes by Claim~(ii). For $i\in \{1,2,3\}$, let $x_1^i,\ldots,x_{q_i}^i$ be the $0$-clique-adjacent vertices of $T_i$, where we assume without loss of generality that $q_1\leq q_2\leq q_3$. By Claim~(iii), any injective colouring of $G$ which has $\{u_1,u_2,u_3\}$ as one of its colour classes has at most $q_1$ other colour classes of size~$3$ besides $\{u_1,u_2,u_3\}$. As can be seen from Claim~(iv), the value $\frac{1}{2}n+\frac{1}{2}s-\frac{1}{2}p-\frac{1}{2}$ is minimized if the number $p$ of size-$3$ colour classes is maximum.

From the above we can now do as follows. For $p=q_1,\ldots,1$, we check if $G$ has an injective colouring with exactly $p$ colour classes of size~$3$. We stop as soon as we find a yes-answer or if $p$ is set to $0$. We first set $\{x_1^1,x_1^2,x_1^3\},\ldots, \{x_p^1,x_p^2,x_p^3\}$ as the colour classes of size~$3$ by Claim~(iii). Let $Z$ be the set of remaining $0$-clique-adjacent and $1$-clique-adjacent vertices. We use Lemma~\ref{l-easy} to check in polynomial time if the subgraph of $G$ induced by $S\cup Z$ has an injective colouring that uses $s+\frac{1}{2}(n-s-3(p+1))$ colours (which is the minimum number of colours possible). If so, then we stop and note that after adding the size-$3$ colour classes we obtained an injective colouring of~$G$ that uses $\frac{1}{2}n+\frac{1}{2}s-\frac{1}{2}p-\frac{1}{2}$ colours, which we remember. Otherwise we repeat this step after first setting $p:=p-1$.

As the above procedure for a triple $u_1,u_2,u_3$ takes polynomial time and the number of triples we must check is $O(n^3)$, our algorithm runs in polynomial time. We take the $3$-injective colouring that uses the smallest number of colours and compare it with the number of colours used by the optimal $2$-injective colouring that we computed at the start. Our algorithm then returns a colouring with the smallest of these two values as its output. 
\end{proof}

\noindent
We use the above lemma for proving our next lemma.

\begin{lemma}\label{l-2p1p3}
{\sc Injective Colouring} is polynomial-time solvable for $(2P_1+P_3)$-free graphs.
\end{lemma}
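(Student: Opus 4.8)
The plan is to reduce the problem to the two situations that Lemmas~\ref{l-4p1} and~\ref{l-easy} already solve. First I would pass to connected graphs. In an injective colouring two vertices lying in different connected components are at distance at least~$3$ and have no common neighbour, so colours may be reused freely from component to component; hence the injective chromatic number of $G$ equals the maximum of the injective chromatic numbers of its components. As each component is again $(2P_1+P_3)$-free, it suffices to colour every component optimally and take the maximum, so from now on I assume $G$ is connected.

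The heart of the argument is the structural dichotomy that \emph{a connected $(2P_1+P_3)$-free graph either is $4P_1$-free, or contains no three vertices that are pairwise at distance at least~$3$}. Granting this, we are done: if $G$ is $4P_1$-free we apply Lemma~\ref{l-4p1}. Otherwise $G$ has no such triple; since a colour class of an injective colouring is precisely a set of vertices that are pairwise non-adjacent and pairwise without a common neighbour (equivalently, pairwise at distance at least~$3$), every colour class then has size at most~$2$. Thus every injective colouring of $G$ is a $2$-injective colouring, so an optimal injective colouring coincides with an optimal $2$-injective colouring, which Lemma~\ref{l-easy} computes in polynomial time.

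It remains to prove the dichotomy, which I expect to be the main obstacle. Suppose $G$ is connected and $(2P_1+P_3)$-free and has three vertices $u_1,u_2,u_3$ pairwise at distance at least~$3$; I must show $\alpha(G)\le 3$. The two tools are the following. (a) Every geodesic from $u_i$ to $u_j$ begins with an induced path $u_i-w-x$, where $w\in N(u_i)$ and $x\in N(w)\setminus N[u_i]$; moreover $u_\ell\not\sim u_i$ and $u_\ell\not\sim w$ for the third vertex $u_\ell$, since $u_\ell$ is at distance at least~$3$ from $u_i$. (b) For any induced $P_3$, the set $W$ of vertices adjacent to none of its three vertices is a clique, because two non-adjacent vertices of $W$ together with the $P_3$ would induce $2P_1+P_3$. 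The clean mechanism behind the dichotomy is that an induced $P_3$ which hangs off $u_1$ and avoids $u_2$ and $u_3$ is forbidden: $u_2$ and $u_3$ are non-adjacent and, being far from $u_1$, are also non-adjacent to such a $P_3$, so they would supply the two isolated vertices of a $2P_1+P_3$.

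To finish, I would take an alleged independent set $I=\{a_1,a_2,a_3,a_4\}$ and, using (a) to produce local induced $P_3$'s at $u_1,u_2,u_3$ and (b) to force all but at most one vertex of $I$ into the neighbourhood of each such $P_3$, derive a configuration that exhibits an induced $P_3$ on vertices non-adjacent to two of the $u_i$'s, contradicting the forbidden mechanism above. The delicate point — and where I expect the real work to lie — is the bookkeeping of which members of $I$ attach to which local $P_3$, so as to guarantee that some induced $P_3$ escapes the closed neighbourhoods of two of the far vertices; this is a finite case analysis rather than a single trick.
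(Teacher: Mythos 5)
Your overall architecture is exactly the paper's: assume $G$ connected, split into the case where $G$ is $4P_1$-free (Lemma~\ref{l-4p1}) and the case where every colour class has size at most~$2$ (Lemma~\ref{l-easy}), with the two cases linked by the structural fact that a connected $(2P_1+P_3)$-free graph containing three vertices pairwise at distance at least~$3$ must be $4P_1$-free. That dichotomy is true and is precisely what the paper establishes. The problem is that you do not prove it: you explicitly defer it to ``a finite case analysis'' whose bookkeeping you acknowledge not having done, and this is the entire mathematical content of the lemma. Moreover, the ``clean mechanism'' you state in support of it is false as written: if $P=u_1wx$ is an induced $P_3$ hanging off $u_1$, then $u_2$ and $u_3$ being at distance at least~$3$ from $u_1$ forces them to be non-adjacent to $u_1$ and to $w$, but \emph{not} to the endpoint $x$, which is at distance~$2$ from $u_1$ and may well be adjacent to $u_2$ or $u_3$. (Your closing paragraph corrects the hypothesis to ``escapes the closed neighbourhoods of two of the far vertices'', but then the whole difficulty is to exhibit such a $P_3$, and that is exactly the step you leave open.)

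The gap is easy to close, and more directly than via your proposed route through an independent set $\{a_1,a_2,a_3,a_4\}$. Given $u_1,u_2,u_3$ pairwise at distance at least~$3$, no other vertex has two neighbours among them, so $V(G)\setminus\{u_1,u_2,u_3\}$ partitions into $T_0$ (no neighbour in $\{u_1,u_2,u_3\}$) and $T_1,T_2,T_3$ (neighbours of $u_1,u_2,u_3$ respectively). If $T_0\neq\emptyset$, connectivity yields $v\in T_0$ adjacent to some $t\in T_i$, and then $\{u_j,u_k,u_i,t,v\}$ induces a $2P_1+P_3$ (here $t$'s only neighbour among the three is $u_i$ and $v$ has none, which is what rescues your mechanism). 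If some $T_i$ contains non-adjacent $t,t'$, then $\{u_j,u_k,t,u_i,t'\}$ induces a $2P_1+P_3$. Hence $V(G)$ is covered by the three cliques $T_i\cup\{u_i\}$, so $G$ is $4P_1$-free. As submitted, however, the proposal asserts the key claim without a proof and backs it with an argument sketch containing a wrong step, so it is not a complete proof.
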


\begin{proof}
Let $G=(V,E)$ be a $(2P_1+P_3)$-free graph. We may assume without loss of generality that $G$ is connected and by Lemma~\ref{l-4p1} that $G$ has an induced $4P_1$.
We first show that any colour class in any injective colouring of $G$ has size at most $2$. For contradiction, assume that $c$ is an injective colouring of $G$ such that there exists some colour, say colour~$1$, that has a colour class of size at least~$3$. Let $U=\{u_1,\ldots,u_p\}$ for some $p\geq 3$ be the set of vertices of $G$ with $c(u_i)=1$ for $i\in \{1,\ldots,p\}$.

As $c$ is injective, every vertex in $G-U$ has at most one neighbour in $U$. Hence, we can partition $G-U$ into (possibly empty) sets $T_0,\ldots,T_p$, where $T_0$ is the set of vertices with no neighbour in $U$ and for $i\in \{1,\ldots,p\}$, $T_i$ is the set of vertices of $G-U$ adjacent to $u_i$.

We first claim that $T_0$ is empty. For contradiction, assume $v\in T_0$. As $G$ is connected, we may assume without loss of generality that $v$ is adjacent to some vertex $t\in T_1$. Then $\{u_2,u_3,u_1,t,v\}$ induces a $2P_1+P_3$, a contradiction. Hence, $T_0=\emptyset$.

We now prove that every $T_i$ is a clique. For contradiction, assume that $t$ and $t'$ are non-adjacent vertices of $T_1$.
Then $\{u_2,u_3.t,u_1,t'\}$ induces a $2P_1+P_3$, a contradiction. Hence, every $T_i$ and thus every $T_i\cup \{u_i\}$ is a clique.

We now claim that $p=3$. For contradiction, assume that $p\geq 4$. As $G$ is connected and $U$ is an independent set, we may assume without of generality that there exist vertices $t_1\in T_1$ and $t_2\in T_2$ with $t_1t_2\in E$. Then $\{u_3,u_4,u_1,t_1,t_2\}$ induces a $2P_1+P_3$, a contradiction. Hence, $p=3$.

Now we know that $V$ can be partitioned into three cliques $T_1\cup \{u_1\}$, $T_2\cup \{u_2\}$ and $T_3\cup \{u_3\}$. However, then $G$ is $4P_1$-free, a contradiction. We conclude that every colour class of every injective colouring of $G$ has size at most~$2$. This means we can use Lemma~\ref{l-easy}.
\end{proof}

\noindent
We also use Lemma~\ref{l-4p1} in the proof of our next result.

\begin{lemma}\label{l-3p1p2}
{\sc Injective Colouring} is polynomial-time solvable for $(3P_1+P_2)$-free graphs.
\end{lemma}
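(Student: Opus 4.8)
The plan is to mirror the structure of the proofs of Lemmas~\ref{l-2p1p3} and~\ref{l-p1p4}: show that for a $(3P_1+P_2)$-free graph $G$ the colour classes of any injective colouring are of bounded size (here at most~$3$, since $G$ need not be $4P_1$-free), reduce the analysis to a structured partition of $V(G)$ built around a large colour class, and then invoke the earlier machinery (Lemma~\ref{l-easy} for the size-$2$ case and, crucially, Lemma~\ref{l-4p1}) to solve the residual problem in polynomial time. As in Lemma~\ref{l-2p1p3}, I would first reduce to the case where $G$ is connected and, by Lemma~\ref{l-4p1}, assume $G$ contains an induced $4P_1$, so that the interesting structure is present.

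First I would bound the size of a colour class. Suppose $c$ is an injective colouring and colour~$1$ appears on an independent set $U=\{u_1,\dots,u_p\}$. Since $c$ is injective, each vertex outside $U$ has at most one neighbour in $U$, giving a partition $T_0,T_1,\dots,T_p$ exactly as in Lemma~\ref{l-2p1p3}, where $T_0$ is the set of vertices with no neighbour in $U$ and $T_i$ the neighbourhood of $u_i$ in $G-U$. The forbidden induced subgraph $3P_1+P_2$ is weaker than $2P_1+P_3$, so the arguments must be re-derived: I would show that if some $T_i$ contains two non-adjacent vertices $t,t'$, then together with three independent vertices of $U$ (available once $p\geq 3$) one can locate an induced $3P_1+P_2$, forcing each $T_i$ to be a clique; similarly, a nonempty $T_0$ together with an edge into some $T_i$ and three members of $U$ should yield a forbidden $3P_1+P_2$. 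The key quantitative difference from the $(2P_1+P_3)$-free case is that here $p$ cannot be pushed down to exactly~$3$, so colour classes of size up to~$3$ survive, and the partition into $T$-cliques matches precisely the structure exploited in Lemma~\ref{l-4p1}.

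The main obstacle, and the place where the $(3P_1+P_2)$-free hypothesis does real work, is handling colour classes of size exactly~$3$ and the interaction between several such classes: unlike the $(2P_1+P_3)$-free setting I cannot conclude that size-$3$ classes are impossible, so I must instead argue that whenever a size-$3$ class $\{u_1,u_2,u_3\}$ exists, the graph decomposes into the three cliques $T_i\cup\{u_i\}$, i.e.\ $G$ is covered by three cliques and hence is $4P_1$-free. That brings the computation squarely within the scope of Lemma~\ref{l-4p1}. Concretely, I would case-split on whether $G$ admits an injective colouring all of whose classes have size at most~$2$ (solved directly by Lemma~\ref{l-easy}) or whether some optimal colouring contains a size-$3$ class; in the latter case the triple forcing a three-clique cover lets me reduce to the $4P_1$-free algorithm of Lemma~\ref{l-4p1}.

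The algorithm then runs as follows. Compute an optimal $2$-injective colouring via Lemma~\ref{l-easy}. Separately, for each of the $O(n^3)$ candidate triples $\{u_1,u_2,u_3\}$, test in polynomial time whether it forms a valid size-$3$ colour class with associated $T$-cliques covering $V(G)$; whenever it does, $G$ is $4P_1$-free and Lemma~\ref{l-4p1} computes an optimal injective colouring in polynomial time. Taking the best over all these computations and comparing with the $2$-injective optimum yields the injective chromatic number. Since each step is polynomial and there are polynomially many triples, the whole procedure runs in polynomial time, establishing the claim. The delicate part of the write-up will be verifying carefully that each forbidden configuration really embeds a $3P_1+P_2$ rather than only a $2P_1+P_3$, as the independent triple must be supplied from within $U$ rather than assumed.
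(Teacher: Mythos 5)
Your overall skeleton (bound colour classes by $3$ via connectivity and $3P_1+P_2$-freeness, guess the size-$3$ triple, fall back on Lemma~\ref{l-easy}) matches the paper, and your Claim that $p\geq 4$ is impossible is correct: for $v\in T_1$ the set $\{u_2,u_3,u_4,u_1,v\}$ induces a $3P_1+P_2$. But the structural claims you build the rest of the argument on are false. First, the sets $T_i$ need \emph{not} be cliques: if $t,t'\in T_1$ are non-adjacent, every candidate $3P_1+P_2$ you try to assemble from $\{u_1,u_2,u_3,t,t'\}$ is blocked by the edges $u_1t$ and $u_1t'$, and with only three vertices in $U$ there is no fourth independent vertex to spare. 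All one can derive is that each $T_i$ is $(P_1+P_2)$-free, i.e.\ complete multipartite. Second, $T_0$ need not be empty: for $v\in T_0$ adjacent to $t\in T_1$, the set $\{u_2,u_3,v,t\}$ only gives a $2P_1+P_2$, and the natural fifth vertex $u_1$ is adjacent to $t$. Consequently your central reduction --- ``a size-$3$ class forces a cover by three cliques, hence $G$ is $4P_1$-free, hence Lemma~\ref{l-4p1} applies'' --- collapses: a $(3P_1+P_2)$-free graph can simultaneously contain an induced $4P_1$ (e.g.\ $u_1,u_2,u_3$ together with any vertex of a non-empty independent set $T_0$) and admit an injective colouring with a size-$3$ class. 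It is also circular, since you already assumed at the outset that $G$ contains a $4P_1$ (otherwise Lemma~\ref{l-4p1} was applied immediately), so ``reducing back'' to the $4P_1$-free case cannot be the route.

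What is missing are the two structural facts the paper actually proves and that make the algorithm work. (a) If a size-$3$ class $U$ exists and $T_0\neq\emptyset$, then $T_0$ is independent and a sequence of $3P_1+P_2$ arguments forces \emph{every} other colour class to be a singleton, so the colouring is essentially unique and uses $n-2$ colours --- no further optimisation is needed for that triple. (b) If $T_0=\emptyset$, then using the complete-multipartite structure of the $T_i$ and the presence of an induced $4P_1$ one shows $U$ is the \emph{only} size-$3$ class; one then colours $G-U$ optimally with Lemma~\ref{l-easy} and adds one fresh colour for $U$. Without (a) and (b) you have no polynomial-time way to handle the triples you guess, so the proposal as written does not yield a proof.
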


\begin{proof}
Let $G$ be a $(3P_1+P_2)$-free graph on $n$ vertices. We may assume without loss of generality that $G$ is connected and by Lemma~\ref{l-4p1} that $G$ has an induced~$4P_1$. As before, we will first analyze the structure of injective colourings of $G$. We will then exploit the properties found algorithmically.

Let $c$ be an injective colouring of $G$ that has a colour class $U$ of size at least~$3$. So let $U=\{u_1,\ldots,u_p\}$ for some $p\geq 3$ be the set of vertices of $G$ with, say colour~$1$.
As $c$ is injective, every vertex in $G-U$ has at most one neighbour in $U$. Hence, we can partition $G-U$ into (possibly empty) sets $T_0,\ldots,T_p$, where $T_0$ is the set of vertices with no neighbour in $U$ and for $i\in \{1,\ldots,p\}$, $T_i$ is the set of vertices of $G-U$ adjacent to $u_i$. 

Assume that $p\geq 4$. As $G$ is connected, there exists a vertex $v\notin U$ but that has a neighbour in $U$, say $v\in T_1$. Then $\{u_2,u_3,u_4,u_1,v\}$ induces a $3P_1+P_2$, a contradiction. Hence, we have shown the following claim.

\medskip
\noindent
\emph{Claim 1. Every injective colouring of $G$ is $\ell$-injective for some $\ell\in \{1,2,3\}$.}

\medskip
\noindent
We continue as follows. As $p=3$ by Claim~1, we have $V(G)=U\cup T_0\cup T_1\cup T_2 \cup T_3$.
Suppose $T_0$ contains two adjacent vertices $x$ and $y$. Then $\{u_1,u_2,u_3,x,y\}$ induces a $3P_1+P_2$, a contradiction. Hence, $T_0$ is an independent set. As $G$ is connected, this means each vertex in $T_0$ has a neighbour in $T_1\cup T_2\cup T_3$.

Suppose $T_0$ contains two vertices $x$ and $y$ with the same colour, say $c(x)=c(y)=2$. Let $v\in T_1\cup T_2\cup T_3$, say $v\in T_1$ be a neighbour of $x$. Then, as $c(x)=c(y)$ and $c$ is injective, $v$ is not adjacent to $y$. As $T_0$ is independent, $x$ and $y$ are not adjacent. However, now $\{u_2,u_3,y,x,v\}$ induces a $3P_1+P_2$, a contradiction. Hence, every vertex in $T_0$ has a unique colour.
Suppose $T_0$ contains a vertex~$x$ and $T_1\cup T_2\cup T_3$ contains a vertex $v$ such that $c(x)=c(v)$. We may assume without loss of generality that $v\in T_1$. Then $\{u_2,u_3,x,v,u_1\}$ induces a $3P_1+P_2$, a contradiction.

Finally, suppose that $T_1\cup T_2\cup T_3$ contain two distinct vertices $v$ and $v'$ with $c(v)=c(v')$. Let $x\in T_0$. Then $x$ is not adjacent to at least one of $v$, $v'$, say $xv\notin E$ and also assume that $v\in T_1$. Then $\{u_2,u_3,x,v,u_1\}$ induces a $3P_1+P_2$. Hence, we have shown the following claim.

\medskip
\noindent
\emph{Claim 2. If $c$ is $3$-injective and $U$ is a size-$3$ colour class such that $G$ has a vertex not adjacent to any vertex of $U$, then all colour classes not equal to $U$ have size~$1$.}

\medskip
\noindent
We note that the injective colouring $c$ in Claim~2 uses $n-2$ distinct colours.

\medskip
\noindent
We continue as follows. From now on we assume that $T_0=\emptyset$. Every $T_i$ is $(P_1+P_2)$-free, as otherwise, if say $T_1$ contains an induced $P_1+P_2$, then this $P_1+P_2$, together with $u_2$ and $u_3$, forms an induced $3P_1+P_2$, which is not possible. Hence, each $T_i$ induces a complete $r_i$-partite graph for some integer $r_i$ (that is, the complement of a disjoint union of $r_i$ complete graphs). Hence, we can partition each $T_i$ into $r_i$ independent sets $T_i^1,\dots,T_i^{r_i}$ such that there exists an edge between every vertex in $T_i^a$ and every vertex in $T_i^b$ if $a\neq b$. See also Figure~\ref{f-3p1_plus_p2}.

Suppose $G$ contains another colour class of size~$3$, say $v_1$, $v_2$ and $v_3$ are three distinct vertices coloured~$2$. If two of these vertices, say $v_1$ and $v_2$, belong to the same $T_i$, say $T_1$, then $u_1$ has two neighbours with the same colour. This is not possible, as $c$ is injective. Hence, we may assume without loss of generality that $v_i\in T_i^1$ for $i\in \{1,2,3\}$.

Suppose that $T_1^2$ contains two vertices $s$ and $t$. Then, as $s$ and $t$ are adjacent to $v_1$, both of them are not adjacent to $v_2$ (recall that $c(v_1)=c(v_2)$ and $c$ is injective). Hence, $\{s,t,u_3,v_2,u_2\}$ induces a $3P_1+P_2$ (see Figure~\ref{f-3p1_plus_p2}).
We conclude that for every $i\in \{1,2,3\}$, the sets $T_i^2,\ldots,T_i^{r_i}$ have size~$1$.

\begin{figure}[h]
\centering
\includegraphics[width=0.8\textwidth]{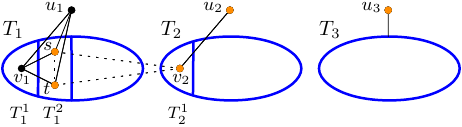}
\caption{The situation in Lemma~\ref{l-3p1p2} where $T_1^2$ contains two vertices $s$ and $t$. We show that this situation cannot happen, as it would lead to a forbidden induced $3P_1+P_2$. Note that each $u_i$ is adjacent to all vertices of $T_i$ and not to any vertices of $T_j$ for $j\neq i$. There may exist edges between vertices of different sets, but these are not drawn.}
\label{f-3p1_plus_p2}
\end{figure}

We will now make use of the fact that $G$ contains an induced $4P_1$. We note that each $T_i\cup \{u_i\}$ is a clique, unless $|T_i^1|\geq 2$. As $V(G)=T_1\cup T_2\cup T_3\cup \{u_1,u_2,u_3\}$ and $G$ contains an induced $4P_1$, we may assume without loss of generality that $T_1^1$ has size at least~$2$. Recall that $v_1\in T_1^1$. Let $z\neq v_1$ be some further vertex of~$T_1^1$.
If $z$ is not adjacent to $v_2$, then $\{z,v_1,u_3,v_2,u_2\}$ induces a $3P_1+P_2$, which is not possible. Hence, $z$ is adjacent to $v_2$. For the same reason, $z$ is adjacent to~$v_3$. This is not possible, as $c$ is injective and $v_2$ and $v_3$ both have colour~$2$. Hence, we have proven the following claim.

\medskip
\noindent
\emph{Claim 3. If $c$ is $3$-injective and $U$ is a size-$3$ colour class such that each vertex of $G-U$ is adjacent to a vertex of $U$, then $c$ has no other colour class of size~$3$.}

\medskip
\noindent
We are now ready to present our polynomial-time algorithm.
We first use Lemma~\ref{l-easy} to find in polynomial time an optimal $2$-injective colouring of $G$. We remember the number of colours it uses.

By Claim~1, it remains to find an optimal $3$-injective colouring with at least one colour class of size~$3$.
We now consider each set $\{u_1,u_2,u_3\}$ of three vertices. We discard our choice if $u_1,u_2,u_3$ do not form an independent set or if $V(G)\setminus \{u_1,u_2,u_3\}$ cannot be partitioned into sets $T_0,\ldots, T_4$ as described above.
Suppose we have not discarded our choice of vertices $u_1$, $u_2$, $u_3$. We continue as follows.

If $T_0\neq \emptyset$, then by Claim~2 the only $3$-injective colouring of $G$ (subject to colour permutation) with colour class $\{u_1,u_2,u_3\}$ is the colouring that gives each $u_i$ the same colour and a unique colour to all the other vertices of $G$. This colouring uses $n-2$ colours and we remember this number of colours.

Now suppose $T_0=\emptyset$. By Claim~3, we find that $\{u_1,u_2,u_3\}$ is the only colour class of size~$3$. Recall that no vertex in $G-\{u_1,u_2,u_3\}=T_1\cup T_2\cup T_3$ is adjacent to more than one vertex of $\{u_1,u_2,u_3\}$. Hence, we can apply Lemma~\ref{l-easy} on $G-\{u_1,u_2,u_3\}$. This yields an optimal $2$-injective colouring of $G-\{u_1,u_2,u_3\}$. We colour $u_1$, $u_2$, $u_3$ with the same colour and choose a colour that is not used in the colouring of $G-\{u_1,u_2,u_3\}$. This yields a $3$-injective colouring of $G$ that is optimal over all $3$-injective colourings with colour class $\{u_1,u_2,u_3\}$. We remember the number of colours.

As the above procedure takes polynomial time and there are $O(n^3)$ triples to consider, we find in polynomial time an optimal $3$-injective colouring of $G$ that has at least one colour class of size~$3$ (should it exist). We compare the number of colours used with the number of colours of the optimal $2$-injective colouring of~$G$ that we found earlier. Our algorithm returns the minimum of the two values as the output. Since both colourings are found in polynomial time, we conclude that our algorithm runs in polynomial time.
\end{proof}

\noindent
For proving our new hardness result we first need to introduce some terminology and prove a lemma on \textsc{Colouring}.
A $k$-colouring of $G$ can be seen as a partition of $V(G)$ into $k$ independent sets. Hence, a ($k$-)colouring of $G$ corresponds to a \emph{($k$-)clique-covering} of $\overline{G}$, which is a partition of $V(\overline{G})=V(G)$ into $k$ cliques. The \emph{clique covering number} $\overline{\chi}(G)$ of~$G$ is the smallest number of cliques in a clique-covering of $G$. Note that $\overline{\chi}(G)=\chi(\overline{G})$.

\begin{lemma}\label{l-4col}
{\sc Colouring} is \NP-complete for graphs with $\overline{\chi}\leq 3$.
\end{lemma}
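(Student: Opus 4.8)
The plan is to first note that membership in \NP\ is immediate: a colouring is a polynomial-size certificate, and checking that it is proper and uses at most $k$ colours takes polynomial time. For hardness, the key is the complement reformulation already set up above. A $k$-colouring of a graph $G$ is exactly a partition of $V(G)$ into $k$ cliques of $\overline{G}$, i.e.\ a $k$-clique-cover of $\overline{G}$, so $\chi(G)=\overline{\chi}(\overline{G})$, and the restriction $\overline{\chi}(G)\le 3$ says precisely that $\overline{G}$ is $3$-colourable, that is, tripartite. I would therefore build the instances of \textsc{Colouring} as complements $G=\overline{H}$ of tripartite graphs $H$; then $\overline{\chi}(G)=\chi(H)\le 3$ holds automatically, so every produced instance lies in the required class, and deciding $\chi(G)\le k$ is the same as deciding whether $H$ has a clique-cover with at most $k$ cliques.

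First I would record the structural fact that in a tripartite graph every clique is \emph{rainbow}, using at most one vertex from each part and hence having at most three vertices. Consequently, if $H$ has three parts of equal size $n$, then a clique-cover of $H$ with only $n$ cliques is possible if and only if $V(H)$ can be partitioned into $n$ rainbow triangles. This reduces the lemma to showing that deciding whether a tripartite graph admits a partition into triangles is \NP-complete, which I would obtain by a reduction from \textsc{$3$-Dimensional Matching}: given ground sets $X,Y,Z$ and a triple set $T$, the rainbow triangles of a tripartite graph on parts $X,Y,Z$ should correspond to triples, a triangle partition should correspond to a perfect matching, and the target value would be $k=|V(H)|/3$.

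The main obstacle, and the step that needs real care, is that a tripartite graph realising the adjacencies of $T$ will in general contain \emph{spurious} rainbow triangles $\{x,y,z\}$ with $xy,yz,xz\in E(H)$ but $(x,y,z)\notin T$; such a triangle would let a clique-cover reach the target size without corresponding to any valid matching, breaking the reduction. To avoid this I would not connect element vertices directly, but route every adjacency through private, per-triple gadget vertices, so that the rainbow triangles of $H$ are in bijection with the legal triple-selections. Concretely, each triple $t=(x,y,z)$ gets its own triangle gadget sharing only the three terminal element-vertices $x,y,z$ with the rest of $H$; the gadget is designed (and kept $3$-colourable, so that $H$ stays tripartite) to admit exactly two triangle-partitions, one covering its terminals and one covering only its private vertices. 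Sharing each element-vertex among the gadgets of all triples containing it then forces each element to be covered by exactly one \emph{chosen} triple, so a triangle partition of $H$ of the target size exists if and only if $T$ contains a perfect matching.

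Finally I would verify the three routine points that make the reduction correct: that the constructed $H$ is $3$-colourable (so that $G=\overline{H}$ indeed satisfies $\overline{\chi}(G)\le 3$), that the only rainbow triangles of $H$ are the intended gadget triangles, and that the arithmetic linking a perfect matching to a clique-cover of size exactly $k$ is correct. Alternatively, if one is willing to cite that \textsc{Partition into Triangles} is already known to be \NP-complete for tripartite (equivalently $3$-colourable) graphs, the whole argument collapses to taking such an instance $H$, setting $G=\overline{H}$ and $k=|V(H)|/3$, and invoking $\chi(G)=\overline{\chi}(H)$ together with $\overline{\chi}(G)=\chi(H)\le 3$.
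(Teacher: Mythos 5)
Your proof is correct, but it takes a genuinely different route from the paper's. The paper reduces from \textsc{List Colouring} on co-bipartite graphs (\NP-complete by a result of Jansen): given a co-bipartite instance $(G,L)$ with colours in $\{1,\ldots,k\}$, it adds a third clique of $k$ ``colour vertices'' $v_1,\ldots,v_k$, joining $v_\ell$ to $u$ exactly when $\ell\notin L(u)$; the resulting graph has $\overline{\chi}\leq 3$ and is $k$-colourable if and only if $G$ has an $L$-colouring. You instead pass to the complement and reduce \textsc{$3$-Dimensional Matching} to \textsc{Partition into Triangles} on tripartite graphs, using that every clique of a tripartite graph has at most three vertices, so a clique-cover of a $3n$-vertex tripartite graph by $n$ cliques must be a triangle partition. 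This is sound, and you correctly isolate the one real danger (spurious rainbow triangles) together with its standard remedy: the Garey--Johnson per-triple gadget with nine private vertices (three pendant triangles hanging off the terminals plus one central triangle) has exactly the all-or-nothing behaviour you require, introduces no unintended triangles, and is $3$-colourable compatibly with the tripartition $X\cup Y\cup Z$ precisely because you reduce from $3$DM rather than from \textsc{Exact Cover by $3$-Sets} (with X3C two terminals of one gadget could be forced into the same colour class, breaking the gadget's $3$-colouring). What remains to make your argument fully rigorous is only to write that gadget down and check its two claimed triangle partitions; as a trade-off, the paper's reduction is shorter but leans on Jansen's hardness result for list colouring of co-bipartite graphs, whereas yours is self-contained from a textbook \NP-complete problem and additionally shows hardness for the rigid target value $k=|V|/3$.
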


\begin{proof}
The \textsc{List Colouring} problem takes as input a graph $G$ and a \emph{list assignment} $L$ that assigns each vertex $u\in V(G)$ a list $L(u)\subseteq \{1,2,\ldots\}$.
The question is whether $G$ admits a colouring $c$ with $c(u)\in L(u)$ for every $u\in V(G)$.
Jansen~\cite{Ja96} proved that \textsc{List Colouring} is \NP-complete for co-bipartite graphs. This is the problem we reduce from.

Let $G$ be a graph with a list assignment $L$ and assume that $V(G)$ can be split into two (not necessarily disjoint) cliques $K$ and $K'$. We set $A_1:=K$ and $A_2:=K\setminus K'$.
As both $A_1$ and $A_2$ are cliques, we have that $\overline{\chi}(G)\leq 2$.
We may assume without loss of generality that the union of all the lists $L(u)$ is $\{1,\ldots,k\}$ for some integer $k$.
We now extend $G$ by adding a clique $A_3$ of $k$ new vertices $v_1,\ldots,v_k$ and by adding an edge between a vertex $v_\ell$ and a vertex~$u\in V(G)$ if and only if $\ell\notin L(u)$.
This yields a new graph $G'$ with $\overline{\chi}(G')\leq 3$. It is readily seen that $G$ has a colouring $c$ with $c(u)\in L(u)$ for every $u\in V(G)$ if and only if $G'$ has a $k$-colouring.
\end{proof}

\noindent
We use Lemma~\ref{l-4col} to prove the next lemma.

\begin{lemma}\label{l-5p1}
{\sc Injective Colouring} is \NP-complete for $5P_1$-free graphs.
\end{lemma}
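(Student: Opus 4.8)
The goal is to show \textsc{Injective Colouring} is \NP-complete for $5P_1$-free graphs, so my plan is to reduce from \textsc{Colouring} restricted to graphs with $\overline{\chi}\leq 3$, which is \NP-complete by Lemma~\ref{l-4col}. The natural idea is that injective colourings and ordinary colourings are tightly linked on graphs of bounded independence number: in a $5P_1$-free graph every colour class of an injective colouring has size at most~$4$, so an injective colouring is close to a proper colouring of a derived graph. I would take an instance $(G,k)$ of \textsc{Colouring} with $\overline{\chi}(G)\le 3$, so $V(G)$ partitions into three cliques $C_1,C_2,C_3$, and build a graph $G'$ whose injective colourings correspond to proper $k$-colourings of $G$.

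The construction I would try is to make the clique structure of $G$ force ``injective = proper''. Concretely, start from $G$ (recall $\overline{\chi}(G)\le 3$ means $G$ is $4P_1$-free, since three cliques cover it and any independent set meets each clique at most once). I would add a small number of gadget vertices to $G$ so that the resulting $G'$ is still $5P_1$-free, and so that in $G'$ the requirement ``no two neighbours of a vertex share a colour'' collapses to ``adjacent vertices of $G$ get different colours'' while preventing large colour classes from giving any advantage. The key technical lever is the size bound on colour classes: in a $5P_1$-free graph every injective colour class has size at most $4$, and one wants to control how many vertices can legitimately share a colour. The hope is to design $G'$ so that the minimum number of colours in an injective colouring of $G'$ equals $k$ (or a fixed function of $k$) exactly when $G$ is $k$-colourable.

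The main obstacle, and the step I expect to be delicate, is reconciling two competing pressures: injectivity forbids a vertex from having two like-coloured neighbours (stronger than properness, which helps force structure), yet on a dense $5P_1$-free graph colour classes can have size up to~$4$ and be placed ``far apart'', which could create injective colourings with fewer colours than any proper colouring of $G$ would suggest. I would therefore need gadgetry that makes every pair of vertices of $G$ effectively within injective-distance~$2$ of each other — for instance by attaching common neighbours or an apex-like structure — so that the injective constraint degenerates into the global constraint that all of $V(G)$ must be properly coloured with the adjacency of $G$ respected, thereby making the reduction faithful. Verifying both directions (that a $k$-colouring of $G$ extends to the target injective colouring of $G'$, and conversely that any injective colouring of $G'$ restricts to a proper $k$-colouring of $G$) while simultaneously checking the $5P_1$-freeness of $G'$ is where the real work lies; the $5P_1$-freeness check is what pins down exactly how many and what kind of gadget vertices one is allowed to add.
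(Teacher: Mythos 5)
Your starting point matches the paper exactly: reduce from \textsc{Colouring} on graphs with $\overline{\chi}\leq 3$ (Lemma~\ref{l-4col}), and use common neighbours to turn the injectivity constraint into the properness constraint for $G$. But the proposal stops at the point where the proof actually begins: you never specify the gadget, and you yourself flag the two places where ``the real work lies'' (controlling large colour classes, and certifying $5P_1$-freeness) without resolving either. As written, this is a plan for a proof, not a proof, and the gap is precisely the construction.

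For the record, the paper's realization of your idea is short and worth internalizing. Let $A_1,A_2,A_3$ be the three cliques covering $V(G)$, assume (after harmless preprocessing) that $k\geq|A_3|$ and that every vertex has a neighbour outside its own clique, and let $E^*$ be the set of edges joining different cliques. Delete each $e=xy\in E^*$ and replace it by a new vertex $v_e$ adjacent to both $x$ and $y$; make the set $A_0$ of all new vertices a clique. This one move does all three jobs at once. First, $G'$ is a union of four cliques, hence $5P_1$-free. Second, $v_e$ is a common neighbour of $x$ and $y$, so injectivity forces $c(x)\neq c(y)$ for every cross-clique edge, while within-clique edges are proper automatically --- this is exactly your ``apex-like structure'', but applied per edge rather than globally. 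Third, your worry about size-$4$ colour classes giving an unfair advantage is neutralized: $A_0$ being a clique uses $|E^*|$ distinct colours, and one checks (using that every vertex of $A_3$ has a neighbour in $A_0$) that none of these colours can reappear on $A_1\cup A_2\cup A_3$, so an injective $(k+|E^*|)$-colouring of $G'$ leaves exactly $k$ colours for $V(G)$ and restricts to a proper $k$-colouring of $G$. The converse direction is immediate by giving $A_0$ fresh colours. Without this (or some equally concrete) gadget and the accompanying colour-separation argument, your reduction cannot be verified.
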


\begin{proof}
The problem is readily seen to belong to \NP.
We reduce from \textsc{Colouring}. Let $(G,k)$ be an instance of this problem. By Lemma~\ref{l-4col} we may assume that $V(G)$ can be partitioned into three cliques $A_1$, $A_2$ and $A_3$ with $|A_1|\leq |A_2|\leq |A_3|$.
We may assume that $k\geq |A_3|$; otherwise $(G,k)$ is a no-instance. Moreover, we may assume that every vertex $u$ in every~$A_i$ has at least one neighbour in $V\setminus A_i$; otherwise $u$ has degree $|A_i|-1\leq k-1$ and hence, $G-u$ is $k$-colourable if and only if $G$ is $k$-colourable.

We now construct a graph $G'$ as follows. Let $E^*$ be the set of edges in $G$ whose end-vertices belong to different cliques of $\{A_1,A_2,A_3\}$.
We add a clique $A_0$ of $|E^*|$ new vertices, so with exactly one vertex $v_e$ for each edge $e=xy$ in $E^*$.
We replace each $e\in E^*$ by the edges $xv_e$ and $yv_e$. We denote the resulting graph by~$G'$ (see also Figure~\ref{f-5p1}).
We claim that $G$ has a $k$-colouring if and only if $G'$ has an injective $(k+|E^*|)$-colouring.

\begin{figure}[h]
\centering
\includegraphics[width=0.35\textwidth]{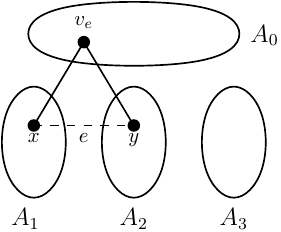}
\caption{The graph $G'$ constructed in the proof of Lemma~\ref{l-5p1}.}
\label{f-5p1}
\end{figure}

First suppose that $G$ has a $k$-colouring $c$. We give each vertex of $A_0$ a unique colour from $\{k+1,\ldots,k+|E^*|\}$. This yields a $(k+|E^*|)$-colouring $c'$ of $G'$. We claim that $c'$ is injective. In order to see this, suppose that $G'$ contains a vertex $s$ that has two neighbours $x$ and $y$ with $c'(x)=c'(y)$. Every vertex in $A_1\cup A_2 \cup A_3$ is only adjacent to vertices from its own clique $A_i$ and $A_0$ and the colour sets used on those two cliques do not intersect.
Hence, $s$ belongs to $A_0$. Then, by definition of $G'$, we find that $x$ and $y$ must belong to different cliques $A_h$ and $A_i$. By construction, $xy$ is an edge in $E$. As $c$ is a $k$-colouring, this means that $c'(x)=c(x)\neq c(y)=c'(y)$, a contradiction. We conclude that $c'$ is an injective $(k+|E^*|)$-colouring of $G'$.

Now suppose that $G'$ has a $(k+|E^*|)$-colouring $c'$. Let $e\in A_0$ and suppose $c'(e)=1$. We assume without loss of generality that $e$ corresponds to an edge $e=xy$ in $G$ with $x\in A_1$ and $y\in A_2$. Then, in $G'$, we have that $e$ is adjacent to $x$ and to $y$. Hence, $x$ and $y$ are not coloured~$1$. As $c'$ is injective, the neighbours of $x$ and $y$ have different colours.
As $A_1$ and $A_2$ are cliques, $x$ is adjacent to every vertex in $A_1\setminus \{x\}$ and $y$ is adjacent to every vertex in $A_2\setminus \{y\}$.
Hence, no vertex in $A_1\cup A_2$ can have colour~$1$.

Now suppose that there exists a vertex $z\in A_3$ with $c'(z)=1$. In $G$ each vertex in every $A_i$ has at least one neighbour in a different clique $A_j$. Hence, $z$ has a neighbour $f\in A_0$ in $G'$ by construction of $G'$. However, now $f$ has two neighbours, $e$ and $z$, each with colour~$1$, contradicting the fact that $c'$ is injective.
We conclude that the colours of $A_0$ do not occur on $A_1\cup A_2\cup A_3$.

Recall that $A_0$ is a clique of size $|E^*|$. Hence, $c'$ uses $|E^*|$ different colours. As no colour of $A_0$ occurs on $A_1\cup A_2\cup A_3$, this means that $|E^*|$ colours are not used on $V(G)$. Hence, the restriction $c$ of $c'$ to $V(G)=A_1\cup A_2 \cup A_3$ is a $k$-colouring of the subgraph of $G'$ induced by $A_1\cup A_2\cup A_3$.

We claim that $c$ is even a $k$-colouring of $G$. Otherwise, if there exists an edge $e=xy$ with $c(x)=c(y)$, then $e$ must be an edge in $G$ that is not in $G'$.
This means that $x$ and $y$ must belong to different cliques $A_i$ and $A_j$. By construction, $G'$ then contains the vertex $e=xy$.
However, then $c'(x)=c(x)=c(y)=c(y')$ and $e'$ has two neighbours with the same colour. This contradicts our assumption that $c'$ is injective. We conclude that $c$ is a $k$-colouring of~$G$.
\end{proof}

\noindent
We combine the above results with results of Bodlaender et al.~\cite{BKTL04} and Mahdian~\cite{Ma02} to prove Theorem~\ref{t-injective}.

\medskip
\noindent
{\bf Theorem~\ref{t-injective} (restated).}
\emph{Let $H$ be a graph. For the class of $H$-free graphs it holds that:\\[-10pt]
\begin{enumerate}
\item [(i)] {\sc Injective Colouring} is polynomial-time solvable if $H\ssin 2P_1+P_4$ and \NP-complete if $H\not\ssi 2P_1+P_4$;\\[-10pt]
\item [(ii)] for every $k\geq 4$, {\sc Injective $k$-Colouring} is polynomial-time solvable if $H$ is a linear forest and \NP-complete otherwise.
\end{enumerate}}

\begin{proof}
We first prove (ii). If $C_3\ssi H$, then we use Lemma~\ref{l-triangle}.
Now suppose $C_p\ssi H$ for some $p\geq 4$. Mahdian~\cite{Ma02} proved that for every $g\geq 4$ and $k\geq 4$, \textsc{Injective $k$-Colouring} is \NP-complete for line graphs of bipartite graphs of girth at least~$g$.
These graphs may not be $C_3$-free but are $C_p$-free for $g\geq p+1$. Now assume $H$ has no cycle, so $H$ is a forest.
If $H$ contains a vertex of degree at least~$3$, then $H$ contains an induced $K_{1,3}$.
As every line graph is $K_{1,3}$-free, we can use the aforementioned result of Mahdian~\cite{Ma02} again.
Otherwise $H$ is a linear forest, in which case we use Corollary~\ref{c-linearforest}.

We now prove (i). Due to (ii), we may assume that $H$ is a linear forest. If $H\ssi P_1+P_4$ or $H\ssi 2P_1+P_3$ or $H\ssi 3P_1+P_2$, then we use Lemma~\ref{l-p1p4},~\ref{l-2p1p3}, or~\ref{l-3p1p2}, respectively.
Hence, if $H\ssin 2P_1+P_4$, then \textsc{Injective Colouring} is polynomial-time solvable for $H$-free graphs. Now suppose that $H\not\ssi 2P_1+P_4$.
If $2P_2\ssi H$, then the class of $(2P_2,C_4,C_5)$-free graphs are contained in the class of $H$-free graphs. The latter class coincides with the class of split graphs~\cite{FH77}.
Recall that Bodlaender et al.~\cite{BKTL04} proved that \textsc{Injective Colouring} is \NP-complete for split graphs. In the remaining case it holds that $5P_1\ssi H$, and for this case we can use Lemma~\ref{l-5p1}.
\end{proof}

\section{Conclusions}\label{s-con}

Our complexity study led to three complete and three almost complete complexity classifications (Theorems~\ref{t-acyclic}--\ref{t-injective}). Due to our systematic approach we were able to identify a number of open questions for future research, which we collect below.

In Lemma~\ref{l-girth} we prove that for every $k\geq 3$ and every $g\geq 3$, \textsc{Acyclic $k$-Colouring} is \NP-complete for graphs of girth at least~$g$.
We would like to prove an analogous result for the third problem we considered.
We recall that \textsc{Injective $3$-Colouring} is polynomial-time solvable for general graphs.
Moreover, for every $k\geq 4$, \textsc{Injective $k$-Colouring} is \NP-complete for bipartite graphs (by Lemma~\ref{l-triangle}) and thus for graphs of girth at least~$4$.
Hence, we pose the following open problem.

\begin{open}\label{o-othertwo}
For every $g\geq 5$, determine the complexity of {\sc Injective Colouring} and {\sc Injective $k$-Colouring} ($k\geq 4$) for graphs of girth at least $g$.
\end{open}

\noindent
This problem has eluded us and remains open and is, we believe, challenging.
We have made progress for the corresponding high-girth problem for \textsc{Star $3$-Colouring} in Lemma~\ref{lem:star-col-high-girth}.
However, we leave the high-girth problem for \textsc{Star $k$-Colouring} open for $k\geq 4$, as follows.
We believe it represents an interesting technical challenge. At the moment, we only know that for $k\geq 4$,
\textsc{Star $k$-Colouring} is \NP-complete for bipartite graphs~\cite{ACKKR04} and thus for graphs of girth at least~$4$.

\begin{open}\label{o-star}
For every $g\geq 5$, determine the complexity of {\sc Star $k$-Colouring} ($k\geq 4$) for graphs of girth at least $g$.
\end{open}

\noindent
Naturally we also aim to settle the remaining open cases for our three problems in Table~\ref{t-thetable}.
In particular, there is one case left for each of the problems {\sc Acyclic Colouring}, \textsc{Star Colouring}, and \textsc{Injective Colouring}.
We note that the graph $G'$ in the proof of Lemma~\ref{l-5p1} contains an induced $2P_1+P_4$.

\begin{open}\label{o-5left}
Determine the complexity of {\sc Injective Colouring} for $(2P_1+P_4)$-free graphs.
\end{open}

\begin{open}\label{o-2p2}
Determine the complexity of {\sc Acyclic Colouring} and {\sc Star Colouring} for $2P_2$-free graphs.
\end{open}

\noindent
Recall that \textsc{Injective Colouring} and \textsc{Colouring} are \NP-complete for $2P_2$-free graphs.
However, none of the hardness constructions for these problems carry over to \textsc{Acyclic Colouring} and \textsc{Star Colouring}.
In this context, the next open problem from Lyons~\cite{Ly09} for a subclass of $2P_2$-free graphs is also interesting.
A graph $G=(V,E)$ is \emph{split} if $V=I\cup K$, where $I$ is an independent set, $K$ is a clique and $I\cap K=\emptyset$.
The class of split graphs coincides with the class of $(2P_2,C_4,C_5)$-free graphs~\cite{FH77} and thus \textsc{Acyclic Colouring} is equivalent to \textsc{Colouring} for split graphs, and hence it is polynomial-time solvable.
However, for \textsc{Star Colouring} this equivalence is no longer true.

\begin{open}[\cite{Ly09}]\label{o-split}
Determine the complexity of {\sc Star Colouring} for split graphs, or equivalently, $(2P_2,C_4,C_5)$-free graphs.
\end{open}

\noindent
Let $\omega(G)$ denote the clique number of $G$ (size of a largest clique of $G$).
Let $\chi_s(G)$ denote the the star chromatic number of $G$.
It is easily observed (see also~\cite{Ly09}) that if $G$ is a split graph, then either $\chi_s(G)=\omega(G)$ or $\chi_s(G)=\omega(G)+1$.

\medskip
\noindent
Finally, we recall that \textsc{Injective Colouring} is also known as $L(1,1)$-labelling.
The general distance constrained labelling problem \textsc{$L(a_1,\ldots,a_p)$-Labelling} is to decide if a graph $G$ has a labelling $c:V(G)\to \{1,\ldots,k\}$ for some integer $k\geq 1$ such that for every $i\in \{1,\ldots,p\}$, $|c(u)-c(v)|\geq a_i$ whenever $u$ and $v$ are two vertices of distance~$i$ in $G$ (in this setting, it is usually assumed that $a_1\geq \ldots \geq a_p$). If $k$ is fixed, we write \textsc{$L(a_1,\ldots,a_p)$-$k$-Labelling} instead. By applying Theorem~\ref{t-general} we obtain the following result.

\begin{theorem}\label{l-la1ap}
For all $k\geq 1, a_1\geq 1,\ldots, a_k\geq 1$, the {\sc $L(a_1,\ldots,a_p)$-$k$-Labelling} problem is polynomial-time solvable for $H$-free graphs if $H$ is a linear forest.
\end{theorem}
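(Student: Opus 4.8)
The plan is to derive this as an instance of Theorem~\ref{t-general}. Fix the integer $k$ and the non-increasing sequence $a_1\geq\cdots\geq a_p\geq 1$, and let $\pi$ be the colouring property that a mapping $c\colon V(G)\to\{1,\ldots,k\}$ satisfies $|c(u)-c(v)|\geq a_i$ whenever $u$ and $v$ lie at distance exactly~$i$ in $G$, for each $i\in\{1,\ldots,p\}$. Because $a_1\geq 1$, any such $c$ separates adjacent vertices, so $\pi$ is indeed a (proper) colouring property, and a $k$-colouring with property~$\pi$ is precisely an $L(a_1,\ldots,a_p)$-$k$-labelling. It therefore suffices to check the two hypotheses of Theorem~\ref{t-general}: expressibility of $\pi$ in MSO$_2$, and that every $\pi$-colouring satisfies the BB-condition.

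For expressibility, I would take, for the fixed value $k$, monadic variables $C_1,\ldots,C_k$ over vertex sets and assert that they partition $V(G)$. For each fixed $i\leq p$ the relation ``$u,v$ are at distance exactly~$i$'' is first-order definable (there is a path of length~$i$ joining them and none of smaller length, a bounded statement since $i$ is constant). The labelling requirement then becomes the conjunction, over all $i$ and all pairs $(s,t)$ with $|s-t|<a_i$, of the sentence forbidding a vertex of $C_s$ and a vertex of $C_t$ at distance~$i$. As $k$ and $p$ are constants this is a single MSO formula, hence expressible in MSO$_2$.

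The crux is the BB-condition, and here I would use the distance-$2$ requirement (present since $p\geq 2$, so $a_2\geq 1$). Suppose $G$ contains a balanced biclique $K_{m,m}$, with parts $S$ and $T$, as a subgraph, and fix $u\in S$. Every vertex of $T$ is adjacent to $u$, and any two distinct $v,v'\in T$ share the common neighbour~$u$, so they are at distance $1$ or $2$ in $G$; in either case the relevant constraint ($a_1\geq 1$ respectively $a_2\geq 1$) forces $c(v)\neq c(v')$. Thus $T$ carries $m$ distinct labels, and since each vertex of $T$ is adjacent to~$u$ the label $c(u)$ differs from all of them, giving at least $m+1$ distinct labels. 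Hence every $\pi$-colouring uses more colours than the order of the largest biclique subgraph of $G$, which is exactly the BB-condition.

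With both hypotheses in hand, Theorem~\ref{t-general} yields a linear-time algorithm on $H$-free graphs for the linear forest~$H$, completing the proof. I expect the BB-condition to be the only genuinely delicate point: it relies on a distance-$2$ constraint to force injectivity on the neighbourhood of a biclique vertex, which is precisely the feature that separates these labellings from plain $k$-colouring (the degenerate case $p=1$, $a_1=1$), for which no analogous polynomial result on linear-forest-free graphs is known.
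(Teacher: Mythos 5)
Your proposal is correct and matches the paper's approach exactly: the paper derives this theorem as a one-line application of Theorem~\ref{t-general}, and your verification of MSO$_2$-expressibility and of the BB-condition via the distance-$2$ constraint supplies precisely the details the paper omits. Your closing caveat is also apt: the argument genuinely needs $p\geq 2$ (so that $a_2\geq 1$ forces distinct labels on one side of a biclique through their common neighbour), since the degenerate case $p=1$, $a_1=1$ is ordinary $k$-\textsc{Colouring}, for which the BB-condition fails and no such polynomial result is known.
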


\noindent
We leave a more detailed and systematic complexity study of problems in this framework for future work (see, for example,~\cite{Ca11,FGKLP12,FKK01} for some complexity results for both general graphs and special graph~classes).

\medskip
\noindent
\emph{Acknowledgments.} We thank Cyriac Antony for some helpful comments.

\end{document}